\newtheorem{definition}{Definition}[section]
\newtheorem{theorem}{Theorem}[section]
\newtheorem{lemma}{Lemma}[section]
\newtheorem{observation}{Observation}[section]
\newenvironment{proof-sketch}{\noindent{\bf The proof sketch.}\hspace*{1em}}{\qed\bigskip}
\newcommand{\sS}{\mathcal{S}} 
\newcommand{\sR}{\mathcal{R}} 
\newcommand{\sP}{\mathcal{P}} 
\newcommand{\rR}{\mathscr{R}} 
\newcommand{\Q}{\mathcal{Q}}
\newcommand{\fR}{\mathfrak{R}} 
\newcommand{\bR}{\mathbb{R}}
\newcommand{\tinyo}{\scriptscriptstyle o}
\newcommand{\domega}{\overset{\tinyo}{\Omega}}
\newcommand{\dtheta}{\overset{\tinyo}{\Theta}}
\newcommand{\dO}{\overset{\tinyo}{O}}
\newcommand{\etal}{\textit{et al.}}
\newcommand{\ignore}[1]{}
\DeclareFontFamily{U}{tipa}{}
\DeclareFontShape{U}{tipa}{m}{n}{<->tipa10}{}
\newcommand{\arc@char}{{\usefont{U}{tipa}{m}{n}\symbol{62}}}%
\newcommand{\arc}[1]{\mathpalette\arc@arc{#1}}
\newcommand{\arc@arc}[2]{%
  \sbox0{$\m@th#1#2$}%
  \vbox{
    \hbox{\resizebox{\wd0}{\height}{\arc@char}}
    \nointerlineskip
    \box0
  }%
}
\begin{document}

\title{Lower Bounds for Semialgebraic Range Searching and Stabbing Problems}
\titlenote{2021 SoCG Best Paper. This version contains several improved results comparing to the version in SoCG conference proceedings. Manuscript submitted to JACM.}
\author{Peyman Afshani}
\email{peyman@cs.au.dk}
\author{Pingan Cheng}
\email{pingancheng@cs.au.dk}
\affiliation{
  \institution{Aarhus University}
  \city{Aarhus}
  \country{Denmark}
}
\thanks{Supported by DFF (Det Frie Forskningsr\aa d) of Danish Council for Independent Research under grant ID DFF$-$7014$-$00404.}

\begin{CCSXML}
<ccs2012>
<concept>
<concept_id>10003752.10010061.10010063</concept_id>
<concept_desc>Theory of computation~Computational geometry</concept_desc>
<concept_significance>500</concept_significance>
</concept>
</ccs2012>
\end{CCSXML}

\ccsdesc[500]{Theory of computation~Computational geometry}

\acmDOI{}

\begin{abstract}
In the semialgebraic range searching problem, we are given a set of $n$ points in $\bR^d$ 
and we want to preprocess the points such that for any query range belonging to 
a family of constant complexity semialgebraic sets (Tarski cells), 
all the points intersecting the range can be reported or counted efficiently. 
When the ranges are composed of simplices, then the problem is well-understood:
it can be solved using $S(n)$ space and with $Q(n)$ query time with $S(n)Q(n)^d = \tilde{O}(n^d)$
where the $\tilde{O}(\cdot)$ notation hides polylogarithmic factors and this trade-off is tight
(up to $n^{o(1)}$ factors). 
Consequently, there exists ``low space'' structures that use $O(n)$ space with
$O(n^{1-1/d})$ query time and ``fast query'' structures that use $O(n^d)$ space with
$O(\log^{d+1} n)$ query time. 
However, for the general semialgebraic ranges, 
only ``low space'' solutions are known, but the best solutions match the same trade-off curve
as the simplex queries, with $O(n)$ space and $\tilde{O}(n^{1-1/d})$ query time. 
It has been conjectured that the same could be done for 
the ``fast query'' case but this open problem has stayed unresolved. 

Here, we disprove this conjecture.
We give the first nontrivial lower bounds for semilagebraic range searching
and other related problems.
More precisely, we show that any data structure for reporting the points between two
concentric circles, a problem that we call 2D annulus reporting problem,
with $Q(n)$ query time must use $S(n)=\domega(n^3/Q(n)^5)$ space 
where the $\domega(\cdot)$ notation hides $n^{o(1)}$ factors, meaning,
for $Q(n)=\log^{O(1)}n$, $\domega(n^3)$ space must be used.
In addition, we study the problem of reporting the subset of input points between two polynomials of the form $Y=\sum_{i=0}^\Delta a_i X^i$
where values $a_0, \cdots, a_\Delta$ are given at the query time, a problem that we call polynomial slab reporting.
For this, we show a space lower bound of 
$\domega(n^{\Delta+1}/Q(n)^{(\Delta+3)\Delta/2})$, which shows for  $Q(n)=\log^{O(1)}n$,
we must use $\domega(n^{\Delta+1})$ space. 
We also consider the dual problems of semialgebraic range searching, semialgebraic stabbing problems,
and present lower bounds for them.
In particular, we show that in linear space, any data structure that solves 2D annulus stabbing problems must use $\Omega(n^{2/3})$ query time.
Note that this almost matches the upper bound obtained by lifting 2D annuli to 3D. 
Like semialgebraic range searching, we also present lower bounds for general semialgebraic slab stabbing problems.
Again, our lower bounds are almost tight for linear size data structures in this case.
\end{abstract}

\keywords{Semialgebraic Range Searching, Lower Bound, Computational Geometry}

\maketitle

\section{Introduction}
We address one of the biggest open problems of the recent years
in the range searching area.
Our main results are lower bounds in the pointer machine model of computation
that essentially show that the so-called ``fast query'' version of the semialgebraic
range reporting problem is ``impervious'' to the algebraic techniques.
Our main result reveals that to obtain polylogarithmic query time, the data structure
requires $\domega(n^{\Delta+1})$ space\footnote{
$\domega(\cdot)$, $\dO(\cdot)$, $\dtheta(\cdot)$ notations hide $n^{o(1)}$ factors
and $\tilde{\Omega}(\cdot)$, $\tilde{O}(\cdot)$, $\tilde{\Theta}(\cdot)$ notations hide $\log^{O(1)}n$ factors.}, 
where the constant depends on $\Delta$, $n$ is the input size, and $\Delta+1$ is the 
number of parameters of each ``polynomial inequality'' (these will be defined 
more clearly later). 
Thus, we refute a relatively popular recent conjecture that data structures with 
$\dO(n^d)$ space and polylogarithmic query time could exist, where $d$ 
is the dimension of the input points.
Surprisingly, the proofs behind these lower bounds are simple,
and these lower bounds could have been discovered years ago
as the tools we use already existed decades ago.

Range searching is a broad area of research in which 
we are given a set $P$ of $n$ points in $\bR^d$ and the goal is 
to preprocess $P$ such that given a query range $\rR$, we can count
or report the subset of $P$ that lies in $\rR$.
Often $\rR$ is restricted to a fixed family of ranges, e.g.,
in simplex range counting problem, $\rR$ is a simplex in $\bR^d$ and the goal 
is to report $|P \cap \rR|$, or in halfspace range reporting problem, $\rR$ is a halfspace and the goal
is to report $P \cap \rR$. 
Range searching problems have been studied extensively and they have  numerous variants. 
For an overview of this topic, we refer the readers to 
an excellent survey by Agarwal~\cite{toth2017handbook}.

Another highly related problem which can be viewed as the ``dual''  of this problem is range stabbing:
we are given a set $R$ of ranges as input and 
the goal is to preprocess $R$ such that given a query point $p$,
we can count or report the ranges of $R$ containing $p$ efficiently.
Here, we focus on the reporting version of range stabbing problems.

\subsection{Range Searching: A Very Brief Survey}
\subsubsection{Simplex Range Searching}
Simplices is one of the most fundamental family of queries. 
In fact, if the query is decomposable (such as range counting or range reporting queries), 
then simplices can be used as  ``building blocks'' to answer more complicated queries:
for a query $\rR$ which  is a polyhedral region of $O(1)$ complexity, 
we can decompose it into $O(1)$ disjoint simplices (with a constant that depends on $d$)
and thus answering $\rR$ can be reduced to answering $O(1)$ simplicial queries. 

Simplicial queries were hotly investigated in 1980s and this led to development of two important
tools in computational geometry: cuttings and partition theorem and 
both of them have found applications in areas not related to range searching.

\paragraph{Cuttings and Fast Data Structures}
``Fast query'' data structures can answer
simplex range counting or reporting queries in polylogarithmic
query time but by using $O(n^d)$ space and they can be built using cuttings. 
In a nut-shell, given a set $H$ of $n$ hyperplanes in $\bR^d$, a $\frac 1r$-cutting, is a decomposition 
of $\bR^d$ into $O(r^d)$ simplices such that each simplex is intersected by $O(n/r)$ hyperplanes of $H$.
These were developed by some  of the pioneers in the range searching area, such as 
Clarkson~\cite{ClaDCG87},  Haussler and Welzl~\cite{hw87}, Chazelle and Friedman~\cite{Chazelle.Friedman},
Matou\v sek~\cite{Matousek91cuttings}, finally
culminating in a result of Chazelle~\cite{Chazelle.cutting} who optimized various aspects of cuttings.
Using cuttings, 
one can answer simplex range counting, or 
reporting queries with $O(n^d)$ space and $O((\log n)^{d+1} + k)$ query time (where $k$ is
the output size)~\cite{matouvsek1993range}.
The query time can be lowered to $O(\log n)$ by increasing the space slightly to $O(n^{d+\varepsilon})$ for
any constant $\varepsilon >0$~\cite{chazelle1989quasi}.
An interested reader can refer to a survey on cuttings by Chazelle~\cite{Chazelle.book}.

\paragraph{The Partition Theorem and Space-efficient Data Structures}
At the opposite end of the spectrum,
simplex range counting or reporting queries can be answered using linear space but with higher
query time of $O(n^{1-1/d})$, using partition trees and the related techniques. 
This branch of techniques has a very interesting history. 
In 1982, Willard~\cite{Willardpartition} cleverly used ham sandwich theorem
to obtain
a linear-sized data structure with query time of $O(n^{\gamma})$ for some constant $\gamma<1$ for simplicial queries in 2D. 
After a number of attempts that either improved the exponent or generalized the technique
to higher dimensions, Welzl~\cite{Welzlpartree82} in 1982 provided the first optimal 
exponent for the partition trees, then Chazelle \etal~\cite{chazelle1989quasi} provided
the first near-linear size data structure with query time of roughly
$O(n^{1-1/d})$. 
Finally, a data structure with $O(n)$ space and $O(n^{1-1/d})$ query time was given
by Matou\v sek~\cite{matouvsek1993range}. This was also simplified recently by
Chan~\cite{chan2012optimal}.

\paragraph{Space/Query Time Trade-off}
It is possible to combine fast query data structures and linear-sized data structures
to solve simplex queries with $S(n)$ space and $Q(n)$ query time such that
$S(n)Q(n)^d = \tilde{O}(n^d)$.
This trade-off between space and query time is optimal, at least in the pointer machine
model and in the semigroup model~\cite{afshani2012improved,chazelle1996simplex,chazelle1989lower}.

\paragraph{Multi-level Structures, Stabbing and Other Related Queries}
By using multi-level data structures, one can solve more complicated problems where
both the input and the query shapes can be simplicial objects of constant complexity.
The best multi-level data structures use one extra $\log n$ factor in space and query time per 
level~\cite{chan2012optimal} and there exist lower bounds that show 
space/query time trade-off should blow up by at least
$\log n$ factor per level~\cite{AD.frechet17}.
This means that problems such as simplex stabbing (where the input is a set of simplices
and we want to output the simplices containing a given query point) or simplex-simplex containment 
problem (where the input is a set of simplices, and we want to output simplices fully
contained in a query simplex) all have the same trade-off curve of $S(n)Q(n)^d = \tilde{O}(n^d)$
between space $S(n)$ and query time $Q(n)$.

Thus, one can see that the simplex range searching as well as its generalization 
to problems where both the input and the query ranges are  ``flat'' objects is very well understood. 
However, there are many natural query ranges that cannot be represented using simplices, e.g.,
when query ranges are spheres in $\bR^d$.
This takes us to semialgebraic range searching. 

\subsubsection{Semialgebraic Range Searching}
A semialgebraic set is defined as a subset of $\bR^d$ that can
be described as the union or intersection of $O(1)$ ranges, where each
range is defined by $d$-variate polynomial inequality of degree at most $\Delta$, defined
by at most $B$ values given at the query time; we call $B$ the \textit{parametric dimension}.
For instance, with $B=3$, $\Delta=2$, and given three values $a,b$ and $c$ at the query time, 
a circular query can be represented as 
$\left\{ (X,Y) \in \bR^2| (X-a)^2 + (Y-b)^2 \le c^2  \right\}$.
In semialgebraic range searching, the queries are semialgebraic sets.

Before the recent ``polynomial method revolution'', the tools available to deal with 
semialgebraic range searching were limited, at least compared to the simplex queries. 
One way to deal with semialgebraic range searching is through linearization~\cite{YaoYaolinearization}.
This idea maps the input points to $\bR^L$, for some potentially large parameter $L$, such that
each polynomial inequality can be represented as a halfspace.
Consequently, semialgebraic range searching can be solved with the space/query time trade off
of $S(n)Q(n)^L = \tilde{O}(n^L)$.
The exponent of $Q(n)$ in the trade-off can be improved (increased) a bit by exploiting that in $\bR^L$, the
input set actually lies in a $d$-dimensional surface~\cite{agarwal1994range}.
It is also possible to build ``fast query'' data structures but using
$O(n^{B+\varepsilon})$ space by a recent result of Agarwal~\etal~\cite{agarwal2019an}.

In 2009, Zeev Dvir~\cite{Dvirkakeya} proved the discrete Kakeya problem with a very elegant and simple
proof, using a polynomial method. Within a few years, this led to revolution in discrete and
computational geometry, one that was ushered in by Katz and Guth's almost tight bound on 
Erd\H os distinct distances problem~\cite{guth2015erdHos}. For a while, the polynomial method
did not have much algorithmic consequences but this changed with the work of 
Agarwal, Matou\v sek, and Sharir~\cite{agarwal2013range} where they showed that
at least as long as linear-space data structures are considered, semialgebraic range queries 
can essentially be solved within the same time as simplex queries (ignoring some lower order terms).
Later developments (and simplifications) of their approach by Matou\v sek and 
Pat\' akov\' a~\cite{MatousekZuzana} lead to the current best results:
a data structure with linear size and with query time of $\tilde{O}(n^{1-1/d})$.

\paragraph{Fast Queries for Semialgebraic Range Searching: an Open Problem}
Nonetheless, despite the breakthrough results brought on by the algebraic techniques, the fast
query case still remained unsolved, even in the plane: e.g., the best known data structures for 
answering circular queries with polylogarithmic query time still use $\tilde{O}(n^3)$ space,
by using linearization to $\bR^3$.
The fast query case of semialgebraic range searching has been explicitly
mentioned as a major open problem in multiple recent
publications\footnote{
To quote Agarwal \etal~\cite{agarwal2013range},``[a] very interesting and challenging problem is, in our opinion, the fast-query
case of range searching with constant-complexity semialgebraic sets, where the goal
is to answer a query in $O(\log n)$ time using roughly $O(n^d)$ space.''
The same conjecture is repeated in a different survey~\cite{Agarwal2017} and it is also 
  emphasized that the question is even open for disks in the plane, ``... whether a
  disk range-counting query in $\bR^2$ be answered in $O(\log n)$ time using
$O( n^2)$ space?''.}.
In light of the breakthrough result of Agarwal \etal~\cite{agarwal2013range}, it is quite reasonable
to conjecture that semialgebraic range searching should have
the same trade-off curve of $S(n)Q(n)^d = \tilde{O}(n^d)$.

Nonetheless, the algebraic techniques have failed to make sufficient advances to settle this open problem.
As mentioned before, the best known ``fast query'' result by Agarwal~\etal~\cite{agarwal2019an} uses $O(n^{B+\epsilon})$ space.
In general, $B$ can be much larger than $d$ and thus it leaves a big gap between
current best upper bound and the conjectured one.
Given that it took a revolution caused by the polynomial method to advance our knowledge
of the ``low space'' case of semialgebraic range searching, 
it is not too outrageous to imagine that perhaps equally revolutionary techniques are needed
to settle the ``fast query'' case of semialgebraic range searching. 
 
 \subsubsection{Semialgebraic Range Stabbing}
Another important problem is semialgebraic stabbing, 
where the input is a set of $n$ semialgebraic sets, i.e., ``ranges'', and queries are points.
The goal is to output the input ranges that contain a query point. 
Here, ``fast query'' data structures are possible,
for example by observing that an arrangement of $n$ disks in the plane has $O(n^2)$ complexity,
counting or reporting the disks stabbed by a query point can be done with $O(n^2)$ space and $O(\log n)$ query time
using slab and persistent.
However, this simple idea does not work for general semialgebraic sets in higher dimensions.
The main bottleneck before the recent ``polynomial method revolution''
is that for dimensions higher than $4$ no technique was able to decompose 
the arrangement formed by a collection of algebraic surfaces 
into an arrangement with the property that each cell is a constant-complexity semialgebraic set
and the complexity of the decomposed arrangement is close to that of the original arrangement.
This was also one of the roadblocks for the ``fast query'' version of semialgebraic range searching.
Recently, Guth~\cite{guth2015gen} showed
the existence of polynomials whose connected components
give such decompositions for arbitrary dimensions.
An efficient algorithm for computing such partitioning polynomials was given very recently by Agarwal \etal~\cite{agarwal2019an},
and as one of the results, they showed it is possible to build ``fast query''
data structures for semialgebraic stabbing problems using $O(n^{d+\varepsilon})$ space
for $n$ semialgebraic sets in $\bR^d$.
However, comparing to the reporting problems, in this stabbing scenario,
it seems difficult to make advancements in the 
``low space'' side of things; 
e.g., for the planar disk stabbing problem, 
the only known data structure with $O(n)$ space is one that uses linearization to 3D that
results in $\tilde{O}(n^{2/3})$ query time.

\subsection{Our Results}
Our main results are lower bounds in the pointer machine model of computation for
four central problems defined below.
In the \textit{2D polynomial slab reporting} problem,
given a set $\sP$ of $n$ points in $\bR^2$,
the task is to preprocess $\sP$
such that given a query 2D polynomial slab $\rR$,
the points contained in the polynomial slab, i.e., $\rR \cap \sP$, can be reported efficiently.
Informally, a 2D polynomial slab is the set of points $(x,y)$ such that $P(x) \le y \le P(x)+w$,
for some univariate polynomial $P(x)$ of degree $\Delta$ and value $w$ given at the query time. 
In the \textit{2D polynomial slab stabbing} problem, the input is a set of $n$ polynomial slabs
and the query is a point $q$ and the goal is to report all the slabs that contain $q$.
Similarly, in the \textit{2D annulus reporting} problem, the input is a set $P$ of $n$ points in $\bR^2$
and the query is an ``annulus'', the region between two concentric circles. 
Finally, in \textit{2D annulus stabbing} problem, the input is a set of $n$ annuli, the query is a
point $q$ and the goal is to report all the annuli that contain $q$. 

For polynomial slab queries, 
we show that if a data structure answers queries in $Q(n) + O(k)$ time, where $k$ is the output size,
using $S(n)$ space, then 
$S(n)=\domega(n^{\Delta+1}/Q(n)^{(\Delta+3)\Delta/2})$;
the hidden constants depend on $\Delta$.
So  for ``fast queries'', i.e., $Q(n) = \tilde{O}(1)$, 
$\domega(n^{\Delta+1})$ space must be used.
This is
\textit{almost tight} as the exponent matches the upper bounds obtained by linearization
as well as the recent upper bound of Agarwal~\etal~\cite{agarwal2019an}!
Also, we prove that
any structure that answers polynomial slab stabbing queries in $Q(n) + O(k)$ time 
must use $\Omega(n^{1+2/(\Delta+1)}/Q(n)^{2/\Delta})$ space.
In the ``low space'' setting, when $S(n) = O(n)$, this gives
$Q(n) = \Omega(n^{1-1/(\Delta+1)})$.
This is once again \textit{almost tight}, as it matches the upper bounds
obtained by linearization for when $S(n) = \tilde{O}(n)$.

For the annulus reporting problem, we get the same
$S(n)=\domega(n^{3}/Q(n)^{5})$ lower bound as polynomial slab reporting when $\Delta=2$.
For the annulus stabbing problem, we show $S(n)=\Omega(n^{3/2}/Q(n)^{3/4})$, e.g.,
in ``low space'' setting when $S(n) = O(n)$, we must have
$Q(n) = \Omega(n^{2/3})$; compare this with simplex stabbing queries can be solved with $O(n)$ space and
$\tilde{O}(\sqrt{n})$ query time. 
As before, this is almost tight, as it matches the upper bounds obtained by linearization to 3D for when $S(n)=\tilde{O}(n)$.

Somewhat disappointedly, no revolutionary new technique is required to obtain these results.
We use novel ideas in the construction of ``hard input instances'' but otherwise
we use the two widely used pointer machine lower bound frameworks
by Chazelle~\cite{chazelle1990lower},  Chazelle and Rosenberg~\cite{chazelle1996simplex},
and Afshani~\cite{afshani2012improved}.
Our results are summarized in Table~\ref{tab:results}.

{\small
\begin{table}[H]
\setlength\extrarowheight{5pt}
\begin{minipage}{\textwidth}
\centering
\caption{Our Results, $*$ indicates this paper. 
In the table, $\domega(\cdot)$ and $\dO(\cdot)$ notations hide $n^{o(1)}$ factors,
and $\tilde{O}(\cdot)$ notation hides $\log^{O(1)}n$ factors.}
\label{tab:results}
\begin{tabular}{ | m{4.4cm} | m{4cm}| m{4.3cm} |} 
\hline
\bf{Problem}& \bf{Lower Bound} & \bf{Upper Bound} \\
\hline
2D Polynomial Slab Reporting  & $S(n)=\domega\left(\frac{n^{\Delta+1}}{Q(n)^{(\Delta+3)\Delta/2}}\right)^*$ & 
$S(n)=\tilde{O}\left(\frac{n^{\Delta+1}}{Q(n)^{2\Delta}}\right)$~\cite{agarwal1994range, agarwal2013range, matouvsek1993range, agarwal2019an} \\
\bf{When} $\boldsymbol{Q(n)=\dO(1)}$ & $\boldsymbol{S(n)=\domega\left(n^{\Delta+1}\right)}^*$ & 
$\boldsymbol{S(n)=\dO\left(n^{\Delta+1}\right)}$~\cite{agarwal1994range, agarwal2013range, matouvsek1993range, agarwal2019an}  \\
\hline
2D Annulus Reporting  & $S(n)=\domega\left(\frac{n^3}{Q(n)^{5}}\right)^*$ & $S(n)=\tilde{O}\left(\frac{n^3}{Q(n)^4}\right)$
~\cite{agarwal1994range, agarwal2013range, matouvsek1993range, agarwal2019an}\\
\bf{When} $\boldsymbol{Q(n)=\dO(1)}$ & $\boldsymbol{S(n)=\domega\left(n^3\right)}^*$ & 
$\boldsymbol{S(n)=\dO\left(n^3\right)}$~\cite{agarwal1994range, agarwal2013range, matouvsek1993range, agarwal2019an}  \\
\hhline{===}
2D Polynomial Slab Stabbing  & 
$S(n)=\Omega\left(\frac{n^{1 + 2/(\Delta+1) }}{Q(n)^{ 2/\Delta }}\right)^*$ &
$S(n)=\tilde{O}\left(\frac{n^2}{Q(n)^{(\Delta+1)/\Delta}}\right)$~\cite{agarwal2019an}\\
\bf{When} $\boldsymbol{S(n)=\dO(n)}$ & $\boldsymbol{Q(n)=\domega\left(n^{1-1/(\Delta+1)}\right)}^*$ & $\boldsymbol{S(n)=\dO\left(n^{1-1/(\Delta+1)}\right)}$~\cite{agarwal2019an}\\
\hline
2D Annulus Stabbing & $S(n)=\Omega\left(\frac{n^{3/2}}{Q(n)^{3/4}}\right)^*$ & 
$S(n)=\tilde{O}\left(\frac{n^2}{Q(n)^{3/2}}\right)$~\cite{agarwal2019an}\\
\bf{When} $\boldsymbol{S(n)=\dO(n)}$ & $\boldsymbol{Q(n)=\domega\left(n^{2/3}\right)}^*$ & $\boldsymbol{Q(n)=\dO\left(n^{2/3}\right)}$~\cite{agarwal2019an}\\
\hline
\end{tabular}
\end{minipage}
\end{table}
}

\section{Preliminaries}

We first review the related geometric reporting data structure lower bound frameworks.
The model of computation we consider is (an augmented version of) the pointer machine model.

In this model,
the data structure is a directed graph $M$.
Let $\sS$ be the set of input elements.
Each cell of $M$ stores an element of $\sS$
and two pointers to other cells.
Assume a query $q$ requires a subset 
$\sS_q\subset \sS$ to be output.
For the query, we only charge for the pointer navigations.
Let $M_q$ be the smallest connected subgraph, s.t.,
every element of $\sS_q$ is stored in at least one element of $M_q$.
Clearly, $|M|$ is a lower bound for space and $|M_q|$ is a lower bound for query time. 
Note that this grants the algorithm  unlimited computational power as well as
full information about the structure of $M$.

In this model, 
there are two main lower bound frameworks,
one for range reporting~\cite{chazelle1990lower,chazelle1996simplex}, 
and the other for its dual, range stabbing~\cite{afshani2012improved}.
We describe them in detail here.

\subsection{A Lower Bound Framework for Range Reporting Problems}

The following result by Chazelle~\cite{chazelle1990lower} and later Chazelle and Rosenberg~\cite{chazelle1996simplex} 
provides a general lower bound framework for range reporting problems.
In the problem, we are given a set $\sS$ of $n$ points in $\bR^d$
and the queries are from a set $\sR$ of ranges.
The task is to build a data structure such that given any query range $\rR \in \sR$,
we can report the points intersecting the range, i.e., $\rR \cap \sS$, efficiently.

\begin{theorem}[Chazelle~\cite{chazelle1990lower} and Chazelle and Rosenberg~\cite{chazelle1996simplex}]
\label{thm:chazelle-framework}
Suppose there is a data structure for range reporting problems that uses at most $S(n)$ space
and can answer any query in $Q(n)+O(k)$ time where $n$ is the input size and $k$ is the output size.
Assume we can show that there exists an input set $\sS$ of $n$ points satisfying the following:
There exist $m$ subsets $q_1, q_2,\cdots, q_m\subset \sS$,
where $q_i, i=1,\cdots,m$, is the output of some query
and they satisfy the following two conditions:
(i) for all $i=1,\cdots,m$, $|q_i|\ge Q(n)$;
and (ii) the size of the intersection of every $\alpha$ distinct subsets $q_{i_1},q_{i_2},\cdots,q_{i_\alpha}$
is bounded by some value $c\ge 2$, i.e., $|q_{i_1}\cap q_{i_2}\cap \cdots \cap q_{i_\alpha}|\le c$.
Then $S(n)=\Omega(\frac{\sum_{i=1}^m|q_i|}{\alpha2^{O(c)}})=\Omega(\frac{mQ(n)}{\alpha2^{O(c)}})$.
\end{theorem}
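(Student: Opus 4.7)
The plan is to use the pointer-machine structure directly to convert the combinatorial hypothesis on the output sets $q_i$ into a lower bound on $|M|$. First, for every $q_i$ that is the output of some query, since the pointer machine answers the query in time $Q(n) + O(|q_i|)$, there exists a connected subgraph $M_i \subseteq M$ containing all storage cells of $q_i$ with $|M_i| \le Q(n) + c_0 |q_i|$ for some absolute constant $c_0$. The hypothesis $|q_i| \ge Q(n)$ then gives $|M_i| = O(|q_i|)$. After trimming ``dead'' cells whose descendants contain no point of $q_i$ we may further assume that $M_i$ is a tree whose leaves store exactly the points of $q_i$; this is convenient because the pointer machine has out-degree at most $2$, so $M_i$ is effectively a binary tree on $O(|q_i|)$ nodes.

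Next I would perform a double counting on the multiplicities $\mu(v) = |\{i : v \in M_i\}|$ for $v \in M$. On one hand,
\[
\sum_{v \in M} \mu(v) \;=\; \sum_{i=1}^{m} |M_i| \;=\; \Omega\!\left(\sum_{i=1}^{m} |q_i|\right) \;=\; \Omega(m Q(n)).
\]
On the other hand, $|M| \ge |\{v : \mu(v) \ge 1\}|$, so if I can establish an average bound $\mu(v) = O(\alpha \cdot 2^{O(c)})$ over those nodes appearing in at least one $M_i$, the desired conclusion $|M| = \Omega(mQ(n)/(\alpha 2^{O(c)}))$ follows immediately by dividing the two displayed quantities.

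The technical heart of the proof, and the main obstacle, is the average-multiplicity bound. The intuition is that if a node $v$ lies in many $M_i$'s simultaneously, then the local tree structure around $v$ in those $M_i$'s forces the corresponding output sets to share many common points. Concretely, rooting each $M_i$ at $v$ and looking at the binary subtree of size about $c$ produces a canonical ``$v$-neighborhood'' of points drawn from $q_i$; by a packing/pigeonhole argument exploiting that such neighborhoods are binary trees and hence come in at most $2^{O(c)}$ topologically distinct shapes, if the multiplicity at $v$ exceeded $\alpha \cdot 2^{O(c)}$ one could extract $\alpha$ indices whose $v$-neighborhoods coincide on more than $c$ output elements, violating the $\alpha$-wise intersection hypothesis. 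Making this sketch rigorous, i.e., carefully defining ``$v$-neighborhood'', choosing parameters so the $2^{O(c)}$ factor comes out correctly, and averaging the per-node bound over all nodes of $M$, is the delicate part of the argument; once it is in hand, the remainder of the proof is the elementary double counting described above.
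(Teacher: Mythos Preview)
The paper does not give its own proof of this theorem; it is quoted from Chazelle and Chazelle--Rosenberg and used as a black box, so there is nothing in the paper to compare against.

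Your double-counting shell is the standard one and is correct: $|M_i|=O(|q_i|)$ follows from $|q_i|\ge Q(n)$, then $\sum_v \mu(v)=\sum_i|M_i|$, reducing the task to an average-multiplicity bound of $O(\alpha\,2^{O(c)})$. The gap is in the mechanism you propose for that bound. The per-node pigeonhole claim---that $\mu(v)>\alpha\,2^{O(c)}$ would force $\alpha$ queries to share more than $c$ output points via coincident ``$v$-neighborhoods''---does not hold. Cells near $v$ in $M_i$ need not store elements of $q_i$ at all (internal cells of the pruned tree may store arbitrary points), so coincidence of a size-$(c{+}1)$ neighborhood across several $i$ does not force any common \emph{output} points; and if you instead grow the neighborhood until it captures $c{+}1$ genuine output points of $q_i$, its size is no longer $O(c)$ in general, which kills the $2^{O(c)}$ shape count. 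Indeed the pointwise bound is simply false: if all $m$ trees share the data-structure root $r$ then $\mu(r)=m$, regardless of condition~(ii). The actual argument in the cited papers sidesteps this by counting \emph{blocks of an Euler tour} rather than node multiplicities: take a spanning tree of each $M_i$, traverse it by an Euler tour of length $O(|q_i|)$, and greedily cut the tour into segments each containing exactly $c{+}1$ distinct points of $q_i$. There are $\Theta(|q_i|/c)$ segments with total length $O(|q_i|)$, so a constant fraction have length $O(c)$. A short segment together with its $(c{+}1)$ output points is described by a start cell plus $O(c)$ bits (the move sequence and the choice of the $c{+}1$ cells among the $O(c)$ visited), and no such description can recur for $\alpha$ different $i$ without violating~(ii); counting then yields $|M|\cdot 2^{O(c)}\cdot\alpha \ge \sum_i \Omega(|q_i|/c)$, which is the stated bound after absorbing $c$ into $2^{O(c)}$.
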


To use this framework, we need to exploit the property of the considered problem 
and come up with a construction that satisfies the two conditions above.
Often, the construction is randomized and thus one challenge is to satisfy condition (ii) in the worst-case.
This can be done by showing that the probability that (ii) is violated is very small and then using
a union bound to prove that with positive probability the construction satisfies (ii) in the worst-case. 

\subsection{A Lower Bound Framework for Range Stabbing Problems}
Range stabbing problems can be viewed as the dual of range reporting problems.
In this problem, we are given a set $\sR$ of $n$ ranges, 
and the queries are from a set $\Q$ of $n$ points. 
The task is to build a data structure such that given any query point $q\in\Q$,
we can report the ranges ``stabbed'' by this query point,
i.e., $\{\rR \in \sR: \rR \cap q \neq \emptyset\}$, efficiently.
A recent framework by Afshani~\cite{afshani2012improved} provides a simple
way to get the lower bound of such problems.

\begin{theorem}[Afshani~\cite{afshani2012improved}]
\label{thm:afshani-framework}
Suppose there is a data structure for range stabbing problems that uses at most $S(n)$ space
and can answer any query in $Q(n)+O(k)$ time where $n$ is the input size and $k$ is the output size.
Assume  we can show that there exists an input set $R \subset \sR$ of $n$ ranges that satisfy the following:
(i) every query point of the unit  square $U$ is contained in at least $t\ge Q(n)$ ranges;
and (ii) the area of the intersection of every $\alpha<t$ ranges is at most $v$.
Then $S(n)=\Omega(\frac{t}{v2^{O(\alpha)}})=\Omega(\frac{Q(n)}{v2^{O(\alpha)}})$.
\end{theorem}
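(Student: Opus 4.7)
The plan is a pointer-machine charging argument over the unit square $U$. Since each query point $q\in U$ produces an output of size at least $t$, the connected subgraph $M_q$ traversed by the data structure must store at least $t$ ranges stabbed by $q$, so $|M_q|\ge t$. Using $\mathrm{area}(U)=1$ and swapping the order of summation,
\[
  \sum_c \mathrm{area}\!\left(\{q\in U : c\in M_q\}\right) \;=\; \int_U |M_q|\,dq \;\ge\; t.
\]
Bounding $\mathrm{area}(\{q:c\in M_q\})$ for a single cell $c$ directly is hopeless, so I will group cells into ``pieces'' and invoke condition (ii) on each piece.

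Next I would apply a standard tree-decomposition lemma: since $M_q$ is connected inside the pointer graph of out-degree $2$ and contains at least $t$ cells storing output ranges, it can be split into $\Omega(t/\alpha)$ edge-disjoint connected pieces, each containing exactly $\alpha$ output cells (with perhaps one leftover smaller piece, discarded). For any such piece $T$, the $\alpha$ stored ranges are all stabbed by $q$, so $q$ lies in the common intersection of those $\alpha$ ranges, a region of area at most $v$ by hypothesis (ii). Letting $\Pi_T\subseteq U$ denote the set of queries whose decomposition produces $T$, I obtain $\mathrm{area}(\Pi_T)\le v$, while integrating the per-query piece count over $U$ yields
\[
  \sum_T \mathrm{area}(\Pi_T) \;\ge\; \Omega(t/\alpha).
\]

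The last step is to count candidate pieces. In a pointer graph of bounded out-degree $2$, the number of rooted connected subgraphs of size $\alpha$ containing a fixed cell is $2^{O(\alpha)}$ (a Catalan-style bound on bounded-arity subtrees), so across the whole data structure there are at most $S(n)\cdot 2^{O(\alpha)}$ candidate pieces. Combining with the area lower bound gives $S(n)\cdot 2^{O(\alpha)}\cdot v \ge \Omega(t/\alpha)$, and absorbing the extra $\alpha$ factor into $2^{O(\alpha)}$ delivers the claimed $S(n)=\Omega\bigl(t/(v\,2^{O(\alpha)})\bigr)$.

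The main obstacle I expect is making the decomposition canonical enough that the double-count $\sum_T\mathrm{area}(\Pi_T)$ really reproduces the per-query piece count, and cleanly distinguishing the ``output cells'' of $M_q$ (those storing ranges that actually contain $q$, of which there are at least $t$) from navigation cells inside $M_q$, so that the decomposition can be forced to gather $\alpha$ output cells per piece. Both ingredients are standard in the Chazelle--Rosenberg and Afshani frameworks, so once the decomposition lemma is stated cleanly the rest is routine double counting.
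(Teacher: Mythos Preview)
The paper does not contain a proof of this theorem: it is quoted as a black-box framework from Afshani~\cite{afshani2012improved}, with no argument given in the paper itself. So there is nothing to compare your proposal against within the present paper.

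That said, your sketch is a faithful reconstruction of the argument in the cited reference: integrate $|M_q|$ over $U$, canonically decompose each $M_q$ into $\Omega(t/\alpha)$ connected pieces each carrying $\alpha$ output cells, use condition~(ii) to bound the area of the query locus of any fixed piece by $v$, and upper bound the total number of candidate pieces by $S(n)\cdot 2^{O(\alpha)}$ via the bounded out-degree of the pointer graph. The two caveats you flag (making the decomposition canonical so the double count is valid, and distinguishing output cells from navigation cells) are exactly the points handled formally in~\cite{afshani2012improved}; once those are in place the rest is the routine double counting you describe.
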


This is very similar to framework of Theorem~\ref{thm:chazelle-framework} but often it requires no
derandomization.
\ignore{
This lower bound framework has been used to prove lower bounds for
slab enclosure, halfspace range reporting, simplex range reporting~\cite{afshani2012improved},
orthogonal range reporting~\cite{afshani2010orthogonal}, 
rectangle stabbing~\cite{afshani2012higher,afshani20202d} and other related problems.
}

\subsection{Derandomization Lemmas}
As mentioned before, one challenge of using Chazelle's framework
is to show the existence of a hard instance in the worst case
while the construction is randomized.
To do that, we present two derandomization lemmas.

\begin{lemma}
\label{lem:derand-int}
Let $\sP$ be a set of $n$ points chosen uniformly at random in a square $S$ of side length $n$ in $\bR^2$.
Let $\sR$ be a set of ranges in $S$ such that 
(i) the intersection area of any $t\ge 2$ ranges $\rR_1, \rR_2, \cdots \rR_t \in \sR$ 
is bounded by $O\left(n/2^{\sqrt{\log n}}\right)$;
(ii) the total number of intersections is bounded by $O\left(n^{2k}\right)$ for $k\ge 1$.
Then with probability $>\frac{1}{2}$, for all distinct ranges $\rR_1, \rR_2, \cdots, \rR_t \in \sR$,
$|\rR_1\cap \rR_2 \cap \cdots \rR_t \cap \sP| < 3k\sqrt{\log n}$.
\end{lemma}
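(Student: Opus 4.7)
The plan is to fix an arbitrary intersection region, bound the tail probability that too many random points land inside it using a Chernoff-type estimate, and then union bound over the $O(n^{2k})$ distinct intersection regions provided by hypothesis (ii).

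First, I fix any intersection $I = \rR_{i_1}\cap\cdots\cap\rR_{i_t}$. The $n$ points of $\sP$ are independent and uniform in the square $S$ of area $n^2$, and by (i) the area of $I$ is at most $C n / 2^{\sqrt{\log n}}$ for some constant $C$. Hence, if $X = |I\cap\sP|$, then $X$ is stochastically dominated by $\mathrm{Bin}(n,p)$ with $p = C/(n\,2^{\sqrt{\log n}})$, and in particular $\mathbb{E}[X] \le np = C/2^{\sqrt{\log n}}$.

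Second, I would apply the standard upper tail bound $\Pr[\mathrm{Bin}(n,p)\ge a] \le \binom{n}{a} p^a \le (enp/a)^a$ with $a = 3k\sqrt{\log n}$. The base is then at most $eC/(2^{\sqrt{\log n}}\cdot 3k\sqrt{\log n})$, which is smaller than $2^{-\sqrt{\log n}}$ for all sufficiently large $n$, so the tail probability is at most $2^{-3k\log n} = n^{-3k}$. Finally, hypothesis (ii) says there are only $O(n^{2k})$ distinct non-empty intersection regions to consider, so by a union bound the probability that some intersection contains at least $3k\sqrt{\log n}$ points of $\sP$ is at most $O(n^{2k})\cdot n^{-3k} = O(n^{-k})$, which is below $1/2$ for $n$ large enough.

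The only real subtlety, and the main bookkeeping point I want to be careful about, is that the union bound must run over the distinct non-empty intersection regions (which is exactly what (ii) counts) rather than over all $t$-tuples of indices; once that is clarified, the Chernoff estimate and the union bound combine directly, and no further geometric input is needed. Everything else is a routine calculation checking that the $\sqrt{\log n}$ factor in the bound on $\mathrm{area}(I)$ in (i) precisely cancels against the target deviation $3k\sqrt{\log n}$ to yield enough slack in the exponent to absorb the $n^{2k}$ factor.
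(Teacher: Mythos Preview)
Your proposal is correct and follows essentially the same argument as the paper: bound the upper tail of the binomial count in a fixed intersection region via the Chernoff-type estimate $(e\mu/a)^a$ with $a=3k\sqrt{\log n}$ and $\mu\le C/2^{\sqrt{\log n}}$, then union bound over the $O(n^{2k})$ intersection regions. The paper's computation and your binomial tail bound $\binom{n}{a}p^a\le(enp/a)^a$ are the same inequality in slightly different notation, and your remark about union bounding over distinct intersection regions matches the paper's observation that the number of cells in the arrangement is $O(n^{2k})$.
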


\begin{proof}
Consider any intersection region $\rho\in S$ of $t$ ranges with area $A$.
Let $X$ be an indicator random variable with
\[
X_i=
\begin{cases}
1, \textrm{the $i$-th point is inside $\rho$},\\
0, \textrm{otherwise.}
\end{cases}
\]
Let $X=\sum_{i=1}^nX_i$.
Clearly, $\mathbb{E}[X] = \frac{A}{n}$.
By Chernoff's bound,
\[
\Pr\left[X\ge(1+\delta)\frac{A}{n}\right]<\left(\frac{e^\delta}{(1+\delta)^{1+\delta}}\right)^{\frac{A}{n}},
\]
for any $\delta>0$.
Let $\tau=(1+\delta)\frac{A}{n}$, then
\[
\Pr[X\ge\tau]<\frac{e^{\delta\frac{A}{n}}}{(1+\delta)^\tau}<\frac{e^\tau}{(1+\delta)^\tau}=\left(\frac{eA}{n\tau}\right)^\tau.
\]

Now we pick $\tau =3k\sqrt{\log n}$, since $A\le cn/2^{\sqrt{\log n}}$ for some constant $c$, we have
\[
\Pr\left[X\ge3k\sqrt{\log n}\right]<\left(\frac{ce}{2^{\sqrt{\log n}}3k\sqrt{\log n}}\right)^{3k\sqrt{\log n}}<\frac{(ce)^{3k\sqrt{\log n}}}{n^{3k}}.
\]
Since the total number of intersections is bounded by $O(n^{2k})$, the number of cells in the arrangement is also bounded
by $O(n^{2k})$ and thus 
by the union bound, for sufficiently large $n$,
with probability $>\frac{1}{2}$,
the number of points in every intersection region is less than $3k\sqrt{\log n}$.
\end{proof}

\begin{lemma}
\label{lem:derand-ring}
Let $\sP$ be a set of $n$ points chosen uniformly at random in a square $S$ of side length $n$ in $\bR^2$.
Let $\sR$ be a set of ranges in $S$ such that 
(i) the intersection area of any range $\rR \in \sR$ and $S$ is at least $cnt$ 
for some constant $c\ge4k$ and a parameter $t\ge \log n$, where $k \ge 2$;
(ii) the total number of ranges is bounded by $O\left(n^{k+1}\right)$.
Then with probability $>\frac{1}{2}$, for every range $\rR \in \sR$,
$|\rR \cap \sP| \ge t$.
\end{lemma}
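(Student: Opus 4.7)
The plan is to mirror the structure of the proof of Lemma~\ref{lem:derand-int}, but to invoke the \emph{lower-tail} Chernoff bound instead of the upper tail, since we now want to argue that every range receives \emph{at least} $t$ points rather than \emph{at most} some value. Since the square $S$ has area $n^2$ and we sample $n$ points uniformly at random, a region of area $A$ contains in expectation $A/n$ points, so hypothesis~(i) gives, for every fixed $\rR\in\sR$, that the random variable $X_\rR=|\rR\cap\sP|$ has mean $\mu:=\mathbb{E}[X_\rR]\ge ct$ with $c\ge 4k$.

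Next I would apply the multiplicative Chernoff bound in the form
\[
\Pr\bigl[X_\rR\le(1-\delta)\mu\bigr]\le\left(\frac{e^{-\delta}}{(1-\delta)^{1-\delta}}\right)^{\mu},
\]
choosing $\delta=1-t/\mu$ so that $(1-\delta)\mu=t$. Because $\mu\ge ct$, we have $1-\delta=t/\mu\le 1/c$, and plugging in gives a bound of the shape
\[
\Pr[X_\rR<t]\le e^{-(c-1)t}\cdot c^{t}=\exp\bigl(-(c-1-\ln c)\,t\bigr).
\]
Since $c\ge 4k\ge 8$ and $t\ge\log n$, the exponent $(c-1-\ln c)t$ is at least $(k+2)\log n$ (straightforward to verify, e.g., for $k=2$ one gets $c-1-\ln c\ge 7-\ln 8>4=k+2$, and the gap only grows with $k$), so the failure probability for a single range is at most $n^{-(k+2)}$.

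Finally, I would take a union bound over the $O(n^{k+1})$ ranges of $\sR$ guaranteed by hypothesis~(ii): the probability that some range contains fewer than $t$ sample points is at most $O(n^{k+1})\cdot n^{-(k+2)}=O(1/n)$, which is less than $1/2$ for $n$ sufficiently large. The only mildly delicate step is the arithmetic verification that $c-1-\ln c\ge k+2$ whenever $c\ge 4k$ and $k\ge 2$; this is the inequality that dictates the constant $4k$ in the hypothesis, and tightening it would relax the constant but not change the structure of the argument. Everything else is a direct symmetric analogue of the preceding lemma's proof.
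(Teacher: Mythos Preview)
Your proposal is correct and follows essentially the same approach as the paper: compute the expectation $\mu\ge ct$ of points in a fixed range, apply a lower-tail Chernoff bound to get a polynomially small failure probability, then union bound over the $O(n^{k+1})$ ranges. The only cosmetic difference is that the paper uses the simpler Chernoff form $\Pr[X<(1-\delta)\mu]<e^{-\delta^{2}\mu/2}$ with $\delta=(c-1)/c$, obtaining the exponent $(c-1)^{2}/(2c)\ge 2k-1+1/(8k)$, whereas you use the sharper form and obtain the exponent $c-1-\ln c\ge k+2$; both suffice for the union bound.
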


\begin{proof}
The proof of this lemma is similar to the one for Lemma~\ref{lem:derand-int}.
We pick $n$ points in $S$ uniformly at random.
Let $X_{ij}$ be the indicator random variable with
\[
X_{ij}=
\begin{cases}
1, \textrm{point $i$ is in range $j$},\\
0, \textrm{otherwise.}
\end{cases}
\]
We know that the area of each range is at least $cnt$.
Then the expected number of points in each range is $ct$.
Consider an arbitrary range, let $X_j=\sum_{i=1}^nX_{ij}$, then by Chernoff's bound
\begin{align*}
\Pr&\left[X_j<\left(1-\frac{c-1}{c}\right)ct\right]<e^{-\frac{\left(\frac{c-1}{c}\right)^2ct}{2}}\\
\implies\Pr&[X_j<t]<e^{-\frac{(c-1)^2t}{2c}}<\frac{1}{n^{\frac{(c-1)^2}{2c}}}\le\frac{1}{n^{2k-1+1/(8k)}}.
\end{align*}
The second last inequality follows from $t\ge \log n$ and the last inequality follows from $c\ge4k$.
Since the total number of ranges is bounded by $O(n^{k+1})$,
by a standard union bound argument, the lemma holds.
\end{proof}

\section{2D Polynomial Slab Reporting and Stabbing}
We first consider the case when query ranges are 2D polynomial slabs.
The formal definition of 2D polynomial slabs is as follows.

\begin{definition}
Let $P(x)=\sum_{i=0}^\Delta a_ix^i$, where $a_\Delta \neq 0$, be a degree $\Delta$ univariate polynomial.
A 2D polynomial slab is a pair $(P(x), w)$, where $P(x)$ is called the base polynomial
and $w>0$ the width of the polynomial slab.
The polynomial slab is then defined as  
$\{(x,y)\in\bR^2:P(x)\le y\le P(x)+w\}$.
\end{definition}

\subsection{2D Polynomial Slab Reporting}
We consider the 2D polynomial slab reporting problem in this section, 
where  the input is a set $\sP$ of $n$ points in $\bR^2$, and the query is a polynomial slab. 
This is an instance of semialgebraic range searching where we have
two polynomial inequalities where 
each inequality has degree $\Delta$ and it is defined by $\Delta+1$ parameters given at the query time
(thus, $B=\Delta+1$).
Note that $\Delta+1$ is also the dimension of linearization for this problem,
meaning, the 2D polynomial slab reporting problem can be lifted to the simplex range
reporting problem in $\bR^{\Delta+1}$.
Our main result shows that for fast queries (i.e., when the query time is polylogarithmic),
this is tight, by showing an $\domega(n^{\Delta+1})$ space lower bound, in the
pointer machine model of computation. 

Before we present the lower bound, we first introduce a simple property of polynomials,
which we will use to upper bound the intersection area of polynomials.
Given a univariate polynomial $P(x)$, 
the following simple lemma establishes the relationship between the leading coefficient
and the maximum range within which its value is bounded.

\begin{lemma}
\label{lem:poly-int}
Let $P(x)=\sum_{i=0}^\Delta a_ix^i$ be a degree $\Delta$ univariate polynomial  
where $\Delta>0$ and $\left|a_\Delta\right|\ge d$ for some positive $d$.
Let $w$ be any positive value and $x_l$ be a parameter. 
If $|P(x)|\le w$ for all $x\in[x_l,x_l+t]$, then $t\le (\Delta+1)^3\left(\frac{w}{d}\right)^{\frac{1}{\Delta}}$.
\end{lemma}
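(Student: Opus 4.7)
The plan is to exploit the fact that a polynomial with a large leading coefficient cannot stay uniformly small on a long interval, via a Lagrange interpolation argument. Given the interval $[x_l, x_l+t]$, I would pick $\Delta+1$ equally spaced sample points $x_i = x_l + i\,t/\Delta$ for $i=0,1,\dots,\Delta$. Since $P$ has degree $\Delta$, it is completely determined by these samples, and the Lagrange formula gives a closed-form expression for the leading coefficient
\[
a_\Delta \;=\; \sum_{i=0}^\Delta P(x_i)\,\prod_{j\neq i}\frac{1}{x_i-x_j}.
\]

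Next I would bound the right-hand side using the hypothesis $|P(x_i)|\le w$ together with $|x_i-x_j|=|i-j|\,t/\Delta$. The product telescopes into factorials, yielding
\[
\left|\prod_{j\neq i}\frac{1}{x_i-x_j}\right| \;=\; \Bigl(\frac{\Delta}{t}\Bigr)^{\!\Delta}\frac{1}{i!\,(\Delta-i)!}.
\]
Summing over $i$ and recognising the binomial identity $\sum_i \binom{\Delta}{i}=2^\Delta$, I would obtain
\[
d \;\le\; |a_\Delta| \;\le\; w\cdot\Bigl(\frac{2\Delta}{t}\Bigr)^{\!\Delta}\frac{1}{\Delta!}.
\]
Rearranging and using the crude estimate $\Delta!\ge(\Delta/e)^\Delta$ from Stirling gives $t\le 2e\,(w/d)^{1/\Delta}$. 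Since $2e < (\Delta+1)^3$ for every $\Delta\ge 1$, the stated bound follows with plenty of slack, which is why the paper can afford the loose factor $(\Delta+1)^3$.

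There is essentially no obstacle; the only point requiring a small amount of care is the algebra of the Lagrange denominators on an equispaced grid, which must be expressed in terms of $i!(\Delta-i)!$ so that the binomial sum can be applied. An alternative, equally short route is via the $\Delta$-th finite difference $\Delta!\,a_\Delta h^\Delta = \sum_i (-1)^{\Delta-i}\binom{\Delta}{i}P(x_l+ih)$ with $h=t/\Delta$, which yields the same inequality directly without invoking Lagrange. I would choose the Lagrange version since it parallels the kind of computation the rest of the paper is likely to make with base polynomials.
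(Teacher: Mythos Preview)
Your proof is correct and follows the same core strategy as the paper---use Lagrange interpolation at $\Delta+1$ sample points in the interval to express $a_\Delta$ as a weighted sum of the values $P(x_i)$, then bound it using $|P(x_i)|\le w$.

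The only noteworthy difference is in the choice of nodes. The paper takes the \emph{harmonically} spaced points $x_i=t/(i+1)$ and bounds each factor crudely by $|1/(x_i-x_j)|<(\Delta+1)^2/t$, which is what produces the $(\Delta+1)^3$ constant. Your equispaced nodes $x_i=x_l+it/\Delta$ let the denominators collapse exactly into $i!(\Delta-i)!$, and the binomial identity then yields the sharper bound $t\le 2e\,(w/d)^{1/\Delta}$, with a constant independent of $\Delta$. So your version is tighter and arguably cleaner; the paper's version simply did not need anything better than $(\Delta+1)^3$ for the downstream application. Your remark about the $\Delta$-th finite difference is also apt: it is the same computation in disguise, since on an equispaced grid the Lagrange weights for the leading coefficient are exactly the finite-difference weights.
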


\begin{proof}
First note that w.l.o.g., we can assume $x_l=0$, 
because otherwise we can consider a new polynomial $P'(x)=P(x+x_l)$.
Since $P'(x)$ is still a degree $\Delta$ univariate polynomial with $\left|a_{\Delta}\right| \ge d$,
and for all $x\in [0, t]$, $P'(x) = P(x+x_l)$,
to bound $t$, we only need to consider $P'(x)$ on interval $[0, t]$.

Assume for the sake of contradiction that $t>(\Delta+1)^3\left(\frac{w}{d}\right)^{\frac{1}{\Delta}}$.
We show that this will lead to $\left|a_\Delta\right| < d$.

We pick $\Delta + 1$ different points $(x_i,y_i)$, where $x_i \in [0, t]$ and $y_i=P(x_i)$, on the polynomial.
Then $P(x)$ can be expressed as
\[
P(x)=\sum_{i=0}^{\Delta} y_i\prod_{j=0, j \neq i}^{\Delta} \frac{x-x_j}{x_i-x_j}.
\]
The coefficient of the degree $\Delta$ term is therefore
\[
a_{\Delta} = \sum_{i=0}^{\Delta}y_i\prod_{j=0,j \neq i}^{\Delta}\frac{1}{x_i-x_j}.
\]
We pick $x_i=\frac{t}{i+1}$ for $i = 0, 1, \cdots, \Delta$ and we therefore obtain
\[
a_{\Delta}=\sum_{i=0}^{\Delta}y_i\prod_{j=0,j\neq i}^{\Delta}\frac{1}{\frac{t}{i+1}-\frac{t}{j+1}}.
\]
We now upper bound $\left|\frac{1}{\frac{t}{i+1}-\frac{t}{j+1}}\right|$.
We assume $i < j$, the case for $i > j$ is symmetric.
When $i < j$,
\[
\left|\frac{1}{\frac{t}{i+1}-\frac{t}{j+1}}\right| = \frac{1}{\frac{t}{i+1}-\frac{t}{j+1}}=\frac{(i+1)(j+1)}{t(j-i)} < \frac{(\Delta+1)^2}{t},
\]
where the last inequality follows from $i, j = 0, 1, \cdots, \Delta$ and $j-i \ge 1$.
Also by assumption, $|y_i|\le w$, we therefore have
\[
\left|a_{\Delta}\right| < \frac{w(\Delta+1)(\Delta+1)^{2\Delta}}{t^{\Delta}}<d,
\]
where the last inequality follows from $t>(\Delta+1)^3\left(\frac{w}{d}\right)^{\frac{1}{\Delta}}$.
However, in $P(x)$, $\left|a_{\Delta}\right| \ge d$, a contradiction.
Therefore, $t \le (\Delta+1)^3\left(\frac{w}{d}\right)^{\frac{1}{\Delta}}$.
\end{proof}


With Lemma~\ref{lem:poly-int} at hand,
we now show a lower bound for polynomial slab reporting.

\begin{theorem}
\label{thm:general-ring-lb}
Let $\sP$ be a set of $n$ points in $\bR^2$.
Let $\sR$ be the set of all 2D polynomial slabs $\{(P(x),w):\deg(P)=\Delta \ge 2, w>0\}$.
Then any data structure for $\sP$ that solves polynomial slab reporting for queries from $\sR$
with query time $Q(n)+O(k)$, where $k$ is the output size,
uses $S(n)=\domega\left({n^{\Delta+1}}/{Q(n)^{(\Delta+3)\Delta/2}}\right)$ space.
\end{theorem}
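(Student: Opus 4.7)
The plan is to apply the Chazelle--Rosenberg framework (Theorem 2.1) with a random point set and an explicitly designed family of polynomial slabs as witnesses. I take $\sP$ to be $n$ points drawn uniformly at random from the square $S = [0,n]^2$ and parameterize the base polynomials in the normalized form $P(x) = n/2 + \sum_{i=0}^{\Delta} b_i (x/n)^i$; when every coefficient $b_i$ lies in $[-n/C, n/C]$ for a sufficiently large constant $C = C(\Delta)$, the resulting slab of width $w = \Theta(Q(n))$ fits inside $S$ and has area $\Theta(nw)$, so Lemma~2.2 certifies (with probability $>1/2$, assuming $Q(n) \ge \log n$) that every slab in the family contains at least $Q(n)$ points.

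The key step is choosing the grids. For each $i$ the $b_i$'s form an arithmetic progression with step $\delta_i$ inside $[-n/C, n/C]$. If two distinct slabs disagree first at the index $j$, then $P_1(x) - P_2(x)$ is a polynomial of degree exactly $j$ whose leading coefficient has absolute value at least $\delta_j/n^j$, so Lemma~3.1 bounds the $x$-extent where the two slabs overlap by $(j+1)^3 n (2w/\delta_j)^{1/j}$, giving a pairwise intersection area of $2w (j+1)^3 n (2w/\delta_j)^{1/j}$. Requiring this to be at most $n/2^{\sqrt{\log n}}$ uniformly in $j$ forces $\delta_j = \dtheta(w^{j+1})$ for each $j \ge 1$; for $j=0$ the two slabs are disjoint whenever $\delta_0 \ge 2w$, so I set $\delta_0 = 2w$. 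This yields $N_i := \bigl|\{\text{values of } b_i\}\bigr| = \dtheta(n/w^{i+1})$ for $i \ge 1$ and $N_0 = \dtheta(n/w)$, for a total of
\[
m \;=\; \prod_{i=0}^{\Delta} N_i \;=\; \dtheta\!\left(\frac{n^{\Delta+1}}{Q(n)^{1 + \sum_{i=1}^{\Delta}(i+1)}}\right) \;=\; \dtheta\!\left(\frac{n^{\Delta+1}}{Q(n)^{1 + \Delta(\Delta+3)/2}}\right)
\]
slabs, each witnessing a query of output size $\ge Q(n)$.

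To conclude, Lemma~2.1 (applied with $\alpha = 2$ and $k = \Delta+1$, giving $c = O(\sqrt{\log n})$) together with the above Lemma~2.2 application provide, with positive probability, a single point set on which the family satisfies both hypotheses of Theorem~2.1. Plugging $(m, Q(n), \alpha, c)$ into that framework yields
\[
S(n) \;=\; \Omega\!\left(\frac{mQ(n)}{\alpha\, 2^{O(c)}}\right) \;=\; \domega\!\left(\frac{n^{\Delta+1}}{Q(n)^{\Delta(\Delta+3)/2}}\right),
\]
which is the claimed bound since $\Delta(\Delta+3)/2 = (\Delta+3)\Delta/2$.

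The main obstacle is calibrating the grid spacings in the middle paragraph. The steps must be small enough to make the family rich, yet large enough to keep pairwise intersection areas below the Lemma~2.1 threshold; the correct analysis hinges on identifying the ``critical level'' as the highest disagreeing coefficient, so that each level contributes independently the factor $n/w^{i+1}$ to $m$ and the additive $i+1$ to the $Q(n)$ exponent, assembling the full sum $\sum_{i=1}^{\Delta}(i+1)=\Delta(\Delta+3)/2$. Bookkeeping the $n^{o(1)}$ slack absorbed into the $\dtheta$ notation, and verifying that the number of pairwise overlap regions is $\dO(n^{2(\Delta+1)})$ so that Lemma~2.1's cardinality hypothesis is comfortably met, is routine but deserves attention.
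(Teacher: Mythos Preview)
Your proposal is correct and follows essentially the same route as the paper: a uniformly random point set in $[0,n]^2$, a grid of polynomial slabs where the degree-$i$ coefficient is stepped by $\dtheta(w^{i+1})$ so that Lemma~3.1 forces each pairwise intersection area below $n/2^{\sqrt{\log n}}$, the two derandomization lemmas to certify both conditions of Theorem~2.1 with $\alpha=2$, and the final product $\prod_i N_i$ yielding the $n^{\Delta+1}/Q(n)^{(\Delta+3)\Delta/2}$ count. The only differences are cosmetic---you center the polynomials at $n/2$ with symmetric coefficient ranges, whereas the paper uses positive coefficients in $[n/(2d_i),n/d_i]$ and verifies containment on the subinterval $[0,n/4]$---and these do not affect the argument.
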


\begin{proof}
We use Chazelle's framework to prove this theorem.
To this end, we will need to show the existence of a hard input instance.
We do this as follows.
In a square $S$, we construct a set of special polynomial slabs
with the following properties:
(i) The intersection area of any two slabs is small; 
and (ii) The area of each slab inside $S$ is relatively large.
Intuitively and consequently, if we sample $n$ points uniformly at random in $S$,
in expectation, few points will be in the intersection of two slabs,
and many points will be in each slab.
Intuitively, this satisfies the two conditions of Theorem~\ref{thm:chazelle-framework}.
By picking parameters carefully and a derandomization process, we get our theorem.
Next, we describe the details.

Consider a square $S=[0,n] \times [0,n]$.
Let $d_i$ for $i=1,2,\cdots,\Delta$ and $w$ be some parameters to be  specified later.
We generate a set of $\Theta\left(\frac{n^{\Delta}}{2^{\Delta}\prod_{i=1}^{\Delta}d_i} \cdot \frac{n}{w}\right)$ 
polynomial slabs $(P(x), w)$ with
\[
P(x)=\left(\sum_{i=1}^\Delta\frac{j_id_ix^i}{n^i}\right)+kw
\]
where $j_i = \lfloor\frac{n}{2d_i}\rfloor, \lfloor\frac{n}{2d_i}\rfloor+1, \cdots, \lfloor\frac{n}{d_i}\rfloor$
for $1 \le i \le  \Delta$ and 
$k = \lfloor\frac{n}{4w}\rfloor, \lfloor\frac{n}{4w}\rfloor+1, \cdots, \lfloor\frac{n}{2w}\rfloor$.
Note that we normalize the coefficients such that
for any polynomial slab in range  $x\in[0,n]$,
a quarter of this slab is contained in $S$ if $w < n/6$.
To show this, it is sufficient to show that every polynomial is inside $S$,
for every $x \in [0,n/4]$.
As all the coefficients of the polynomials are positive, it is sufficient to upper bound 
$P(n/4)$, among all the polynomials $P(x)$ that we have generated.
Similarly, this maximum is attained when all the coefficients are set to their maximum value, 
i.e., when $j_i = n/d_i$ and $k=n/(2w)$, resulting in the polynomial
$P_u(x)=\left(\sum_{i=1}^{\Delta}{x^i}/{n^{i-1}}\right)+\frac{n}{2}$.
Now it easily follows that $P_u(n/4) < 5n/6$.
Then, the claim follows from the following simple observation.
\begin{observation}\label{ob:int}
  The area of a polynomial slab $\left\{ P(x),w \right\}$ for when $a \le x \le b$ is $(b-a)w$.
\end{observation}
\begin{proof}
  The claimed area is $(\int_a^b (P(x)+w) dx) - (\int_a^b P(x) dx) = \int_{a}^b w dx = (b-a)w$.
\end{proof}

Next, we bound the area of the intersection of two polynomial slabs. 
Consider two distinct slabs $\rR_p = (P(x), w)$ and $\rR_q = (Q(x),w)$.
Observe that by our construction, if $P(x)$ and $Q(x)$ only differ in their constant terms,
their intersection is empty.
So we only consider the case that there exists some $0<i\le\Delta$,
such that the coefficients for $x^i$ are different in $P(x)$ and $Q(x)$.
As each slab is created using two polynomials of degree $\Delta$, 
$\rR_q \cap \rR_p$ can have at most $O(\Delta)$ connected regions. 
Consider one connected region $\fR$ and let the interval $\eta = [x_1, x_2] \subset [0, n]$,
be the projection of $\fR$ onto the $X$-axis. 
Define the polynomial $R(x) = P(x) - Q(x)$ and observe that we must have
$|R(x)| \le w$ for all $x \in [x_1, x_2]$. 
We now consider the coefficient of the highest degree term of $R(x)$. 
Let $j_i d_i / n^i$ (resp. $j'_i d_i / n^i$) be the coefficient of the degree $i$ term in $P(x)$ (resp. $Q(x)$). 
Clearly, if $j_i = j'_i$, then the coefficient of $x^i$ in $R(x)$ will be zero.
Thus, to find the highest degree term in $R(x)$, we need to consider the largest index $i$
such that $j_i \not = j'_i$; in this case, $R(x)$ will have degree $i$ and coefficient of $x_i$ will
have absolute value $\left|(j_i - j'_i) d_i/n^i\right| \ge d_i/n^i$.
By Lemma~\ref{lem:poly-int},
$x_2 - x_1 \le O(\Delta^3)\left( \frac{w n^i}{d_i}\right)^{1/i}$.
Next, by Observation~\ref{ob:int}, 
the area of the intersection of $\rR_q$ and $\rR_p$ is $O(\Delta^3)nw \left( \frac{w}{d_i} \right)^{1/i}$.

We pick $d_i = c\Delta^{3i} w^{i+1}2^{i\sqrt{\log n}}$ and $w=16\Delta Q(n)$,
for a large enough constant $c$.
Then, the intersection area of any two polynomial slabs is bounded by $n/2^{\sqrt{\log n}}$.
Since in total we have generated $O(n^{\Delta + 1})$ slabs,
the total number of intersections they can form is bounded by $O(n^{2(\Delta+1)})$.
By Lemma~\ref{lem:derand-int}, with probability $>\frac{1}{2}$,
the number of points of $\sP$ in any intersection of two polynomial slabs is at most $3(\Delta+1)\sqrt{\log n}$.
Also, as we have shown that 
the intersection area of every slab with $S$ 
is at least $nw/4=4 \Delta nQ(n)$,
by Lemma~\ref{lem:derand-ring},
with probability more than $\frac{1}{2}$,
each polynomial slab has at least $Q(n)$ points of $\sP$.

It thus follows that with positive probability, both conditions of Theorem~\ref{thm:chazelle-framework} are satisfied, 
and consequently, we obtain the lower bound of 
\[
S(n)=\Omega\left(\frac{Q(n)\cdot\frac{n^\Delta}{2^{\Delta}\prod_{i=1}^{\Delta}d_i}\cdot\frac{n}{w}}{2^{3(\Delta+1)\sqrt{\log n}}}\right)
=\domega\left(\frac{n^{\Delta+1}}{Q(n)^{(\Delta+3)\Delta/2}}\right),
\]
where the last equality follows from 
$d_i = c\Delta^{3i} w^{i+1}2^{i\sqrt{\log n}}$, $w=16\Delta Q(n)$, and
\[
\prod_{i=1}^{\Delta}d_i=c\Delta^{3(1+\Delta)\Delta/2}w^{(2+\Delta+1)\Delta/2}2^{\sqrt{\log n}(1+\Delta)\Delta/2}=Q(n)^{(\Delta+3)\Delta/2}n^{o(1)}.\qedhere
\]
\end{proof}
So for the ``fast query'' case data structure, by picking $Q(n)=\log^{O(1)}n$, 
we obtain a space lower bound of $S(n)=\domega(n^{\Delta+1})$.

\subsection{2D Polynomial Slab Stabbing}\label{subsec:polystab}

By small modifications,
our construction can also be applied to obtain a lower bound
for (the reporting version of) polynomial slab stabbing problems
using Theorem~\ref{thm:afshani-framework}.

One modification is that  we need to generate the slabs in such a way that they cover the entire
square $S$. 
The framework provided through Theorem~\ref{thm:afshani-framework} is more stream-lined and
derandomization is not needed and we can directly apply
the ``volume upper bound'' obtained through Lemma~\ref{lem:poly-int}.
There is also no $n^{o(1)}$ factor loss (our lower bound actually uses $\Omega(\cdot)$ notation).
The major change is that we need to use different parameters since we need to create $n$ polygonal slabs,
as now they are the input. 

\begin{theorem}
\label{thm:poly-slab-stab}
Give a set $\sR$ of $n$ 2D polynomial slabs $\{(P(x),w):deg(P)=\Delta \ge 2, w>0\}$, 
any data structure for $\sR$ solving the 2D polynomial slab stabbing problem with query time $Q(n)+O(k)$
uses $S(n)=\Omega\left(\frac{n^{1+2/(\Delta+1)}}{Q(n)^{2/\Delta}}\right)$ space,
where $k$ is the output size.
\end{theorem}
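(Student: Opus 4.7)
The plan is to invoke Afshani's stabbing framework (Theorem~\ref{thm:afshani-framework}) on a construction that reuses the polynomial-slab family from the proof of Theorem~\ref{thm:general-ring-lb}, but with parameters retuned so that (a) the total number of slabs is exactly $n$ (the input size), (b) the slabs cover a rectangle with multiplicity at least $Q(n)$, and (c) every pair of slabs meets in small area. Rescaling that covering rectangle to the unit square then supplies the hard input required by Theorem~\ref{thm:afshani-framework} with $\alpha=2$, and the bound $S(n)=\Omega(t/v)$ will translate directly into the claimed lower bound. Unlike the reporting case there is no random sampling step, so the derandomization lemmas are not needed and no $n^{o(1)}$ factor is lost.

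Concretely, I would work inside a square $[0,N]^2$ for a parameter $N$ to be fixed, and take exactly the same grid of slabs $P(x)+kw$ with $P(x)=\sum_{i=1}^{\Delta} (j_i d_i /N^i) x^i$, but now letting $k$ range over a long enough interval so that the slabs tile a vertical strip of $[0,N]^2$. A short computation (analogous to the $P_u(n/4)<5n/6$ estimate in the reporting proof) shows that if we restrict to a subrectangle $[0,\varepsilon N]\times[\tfrac{N}{4}+O(\varepsilon N),\tfrac{N}{2}-O(\varepsilon N)]$ for a suitable small constant $\varepsilon$, then for each choice of $(j_1,\ldots,j_\Delta)$ there is exactly one valid $k$ whose slab covers any given point in the subrectangle. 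Hence the covering multiplicity is $t=\Theta(\prod_i N/d_i)$, and the number of slabs is $\Theta(N^{\Delta+1}/(w\prod_i d_i))$; setting this count equal to $n$ gives $t=\Theta(nw/N)$. For the intersection bound I would reuse Lemma~\ref{lem:poly-int} exactly as in the reporting proof to conclude that any two slabs intersect in area $O(\Delta^3) N w (w/d_i)^{1/i}$ for the worst $i$, which after rescaling to the unit square becomes $v=O(w(w/d_i)^{1/i}/N)$.

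The next step is the parameter optimization. Imposing $v \ge w(w/d_i)^{1/i}/N$ uniformly in $i$ forces $d_i \ge w^{i+1}/(vN)^i$, and to maximize the bound we pick equality. Plugging this into the slab-count identity $w\prod_i d_i = \Theta(N^{\Delta+1}/n)$ collapses to a single equation in $w$ and $N$; solving it yields $w=N(v^{\beta}/n)^{1/\gamma}$ with $\beta=\Delta(\Delta+1)/2$ and $\gamma=(\Delta+1)(\Delta+2)/2$, and therefore $t=nw/N=n^{(\gamma-1)/\gamma}v^{\beta/\gamma}$. Imposing the constraint $t\ge Q(n)$ pins $v$ to its minimal feasible value $v=Q(n)^{\gamma/\beta}/n^{(\gamma-1)/\beta}$, and at this point Theorem~\ref{thm:afshani-framework} gives
\[
S(n)=\Omega\!\left(\frac{t}{v}\right)=\Omega\!\left(\frac{Q(n)\, n^{(\gamma-1)/\beta}}{Q(n)^{\gamma/\beta}}\right)=\Omega\!\left(\frac{n^{1+2/(\Delta+1)}}{Q(n)^{2/\Delta}}\right),
\]
after simplifying $(\gamma-1)/\beta=(\Delta+3)/(\Delta+1)$ and $\gamma/\beta=(\Delta+2)/\Delta$.

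The main obstacle I expect is the covering step: in the reporting proof we only cared about how many random points fell inside each slab, whereas here we must \emph{guarantee} that every query point of a specific sub-rectangle is stabbed by enough slabs. This requires verifying that the shifts $kw$ can be chosen to reach every $y$-value in the sub-rectangle \emph{simultaneously} for all admissible polynomial shapes, which is why I shrink the $x$-range to $[0,\varepsilon N]$ — there the variation of $P_0(x)$ across choices of $j_i$ is controlled and the $k$-interval used in the construction can be extended to cover a full vertical window. Everything else is essentially a rerun of the reporting proof's algebra with $n$ slabs instead of $n$ points, but this geometric bookkeeping is where the proof earns its $\Omega(\cdot)$ (as opposed to $\domega(\cdot)$) bound.
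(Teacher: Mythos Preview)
Your proposal is correct and follows essentially the same route as the paper: reuse the polynomial-slab family from the reporting proof, extend the vertical shifts so the slabs cover a region with multiplicity $t\ge Q(n)$, invoke Lemma~\ref{lem:poly-int} to bound pairwise intersection areas, and plug $t$ and $v$ into Theorem~\ref{thm:afshani-framework} with $\alpha=2$. The only cosmetic difference is in the covering step: the paper works directly in the unit square and enlarges the $k$-range to $[-\lceil\Delta/w\rceil,\lceil\Delta/w\rceil]$ so that the shifted slabs tile all of $[0,1]^2$ (since $0\le P'(x)\le\Delta$ on $[0,1]$), whereas you shrink the $x$-interval to $[0,\varepsilon N]$ and then rescale a subrectangle; the paper's choice avoids the rescaling and the $\varepsilon$, but the resulting parameters and the final bound are identical.
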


\begin{proof}
We use Afshani's lower bound framework as described in Theorem~\ref{thm:afshani-framework}.
First we generate $n$ polynomial slabs in a unit square $S=[0,1]\times[0,1]$ as follows.
Consider $\Theta\left(\frac{1}{2^{\Delta}\prod_{i=1}^{\Delta}d_i}\cdot\frac{\Delta}{w}\right)$ polynomial slabs 
with their base polynomials being:
\[
P(x)=\sum_{i=1}^\Delta j_id_ix^i+kw,
\]
where $j_i = \lfloor\frac{1}{2d_i}\rfloor, \lfloor\frac{1}{2d_i}\rfloor+1, \cdots, \lfloor\frac{1}{d_i}\rfloor$
for $1 \le i \le \Delta$
and $k = -\lceil\frac{\Delta}{w}\rceil, -\lceil\frac{\Delta}{w}\rceil+1, \cdots, \lceil\frac{\Delta}{w}\rceil$
for parameters $d_i$, $i=1,2,\cdots,\Delta$ and $w$ to be specified later.
Note that by our construction,
any point in the unit square is covered by $t=\Theta(1/(2^{\Delta}\prod_{i=1}^{\Delta}d_i))$ polynomial slabs.
To see this, consider each polynomial of form
\[
P'(x)=\sum_{i=1}^\Delta j_id_ix^i,
\]
where $j_i = \lfloor\frac{1}{2d_i}\rfloor, \lfloor\frac{1}{2d_i}\rceil+1, \cdots, \lfloor\frac{1}{d_i}\rfloor$
for $1 \le i \le \Delta$.
By shifting $P'(x)$ vertically with distance $kw$ for 
$k = -\lceil\frac{\Delta}{w}\rceil, -\lceil\frac{\Delta}{w}\rceil+1, \cdots, \lceil\frac{\Delta}{w}\rceil$,
we generate a series of adjacent 2D polynomial slabs.
For $x\in[0,1]$, $0 \le P'(x) \le \Delta$ for each $P'(x)$.
The maximum value is achieved by picking $j_i = \lfloor1/d_i\rfloor$ for each $i =1, 2, \cdots, \Delta$.
Observe that $S$ is completely contained between polynomials $y=P'(x)-\lceil\frac{\Delta}{w}\rceil w$ 
and $y=P'(x)+\lceil\frac{\Delta}{w}\rceil w$. 
As a result,  for each $P'(x)$ (i.e., each choice of indices $j_i$'s), the square $S$ is covered
exactly once. 
This implies $S$ is covered 
$t = \prod_{i=1}^{\Delta}\left(\lfloor\frac{1}{d_i}\rfloor - \lfloor\frac{1}{2d_i}\rfloor + 1 \right)=\Theta(1/(2^{\Delta}\prod_{i=1}^{\Delta}d_i))$ times,
which is the number of choices we have for the indices $j_1, \cdots, j_\Delta$.
Therefore, each point in $S$ is covered by $t$ polynomial slabs.

We set $d_i=c_1\left(\frac{1}{Q(n)^{2/(\Delta(\Delta+1))}w^{2/(\Delta+1)}}\right)^iw$ for some sufficiently small constant $c_1$
such that $S$ is covered by $t \ge Q(n)$ polynomial slabs.
We then set $w=c_2\frac{Q(n)}{n}$ for some suitable constant $c_2$ according to $c_1$ such that
the total number of polynomial slabs we have generated is exactly $n$.

By Lemma~\ref{lem:poly-int} and a similar argument as in the proof of Theorem~\ref{thm:general-ring-lb},
the intersection area of any two polynomial slabs in our construction
is bounded by
\[
v\le O(\Delta) w(\Delta+1)^3 \left(\frac{w}{d_i}\right)^{\frac{1}{i}}
=f(\Delta)\frac{Q(n)^{1 + \frac{2}{\Delta} }}{ n^{1 + \frac{2}{\Delta+1} }},
\]
where $f(\Delta)=O(1)$ is some value depending on $\Delta$ only
and $i=1,2,\cdots,\Delta$ is the largest degree where the two polynomials differ in their coefficients.
Since $t\ge Q(n)$, then according to Theorem~\ref{thm:afshani-framework},
\[
S(n)=\Omega\left(\frac{t}{v}\right)
=\Omega\left(\frac{n^{1 + \frac{2}{\Delta+1} }}{Q(n)^{ \frac{2}{\Delta} }}\right).
\qedhere
\]
\end{proof}

So for any data structure that solves the 2D polynomial slab stabbing problem
using $S(n)=O(n)$ space,
Theorem \ref{thm:poly-slab-stab} implies that its query time must be
$Q(n)=\Omega(n^{1-1/(\Delta+1)})$.

\section{2D Annulus Reporting and Stabbing}
\subsection{2D Annulus Reporting}
In this subsection, 
we show that any data structure that solves 2D annulus reporting with $\log^{O(1)}n$ query time
must use $\domega(n^3)$ space.
Recall that an annulus is the region between two concentric circles and the \textit{width} of the 
annulus is the difference between the radii of the two circles. 
In general, we show that if the query time is $Q(n) + O(k)$, then the data structure
must use $\domega(n^3/Q(n)^5)$ space.
We will still use Chazelle's framework.

We first present a technical geometric lemma which upper bounds the intersection area of two 2D annuli.
We will later use this lemma to show that with probability more than $1/2$, 
a random point set satisfies the first condition of Theorem~\ref{thm:chazelle-framework}.

\begin{restatable}{lemma}{generalringint}\label{lem:general-ring-int}
  Consider two annuli of width $w$ with inner radii of $r_1,r_2$,
where $r_1+w \le r_2, w < r_1$, and $r_1,r_2=\Theta(n)$.
Let $d$ be the distance between the centers of two annuli.
When $w \le d < r_2$,
the intersection area of two annuli is bounded by $O\left(wn\sqrt{\frac{w^2}{(g+w)d}}\right)$,
where $g=\max\{r_1-r_2+d,0\}$.
\end{restatable}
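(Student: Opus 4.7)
The plan is to compute the intersection area by changing variables to the two squared-distance functions, which linearizes the description of the annuli. Place the centers at $(0,0)$ and $(d,0)$ by an isometry, and for a point $p=(x,y)$ set $u = x^2 + y^2$ and $v = (x-d)^2 + y^2$; the intersection of the two annuli then becomes the rectangle $[r_1^2,(r_1+w)^2] \times [r_2^2,(r_2+w)^2]$ in the $(u,v)$-plane. The map $(x,y)\mapsto(u,v)$ is two-to-one (folded across $y=0$) with Jacobian $4|y|d$, and a direct algebraic identity gives $4d^2 y^2 = a_1 a_2 a_3 a_4$, where, writing $\rho = \sqrt{u}$ and $\sigma = \sqrt{v}$,
\[
a_1 = \sigma + d - \rho,\ \ a_2 = \sigma + \rho - d,\ \ a_3 = \rho + d - \sigma,\ \ a_4 = \rho + \sigma + d.
\]
Re-substituting $\rho,\sigma$ for $u,v$ (Jacobian $4\rho\sigma$) turns the area into
\[
\mathrm{Area} \;=\; \iint \frac{4\rho\sigma}{\sqrt{a_1 a_2 a_3 a_4}}\, d\rho\, d\sigma
\]
over $\rho \in [r_1,r_1+w]$, $\sigma \in [r_2,r_2+w]$, subject to $a_1,a_3 \ge 0$ (the circles of radii $\rho,\sigma$ centred at the two foci must meet).

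Next I would use the hypotheses to dispose of all but one of the four factors. Since $r_1,r_2 = \Theta(n)$ and $d < r_2$, we have $\rho\sigma = \Theta(n^2)$, $a_4 = \Theta(n)$, and $a_2 \ge r_1 + r_2 - d = \Theta(n)$; crucially, the hypothesis $r_1+w \le r_2$ is precisely what forces $a_1 = (\sigma-\rho)+d \ge (r_2 - r_1 - w)+d \ge d$. Therefore $\mathrm{Area} \le C\cdot (n/\sqrt{d}) \cdot I$, where, after the shift $\rho' = \rho-r_1,\ \sigma' = \sigma-r_2$,
\[
I \;=\; \int_{0}^{w}\!\int_{0}^{w} \frac{\mathbf{1}[a_3 \ge 0]}{\sqrt{a_3}}\, d\rho'\, d\sigma', \qquad a_3 = g_s + \rho' - \sigma',
\]
with $g_s := r_1 + d - r_2$ the signed offset (so $g = \max(g_s,0)$).

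To bound $I$, I would rotate to $t = \rho'-\sigma',\ s = \rho'+\sigma'$ (Jacobian $1/2$). For fixed $t$ the cross-section in $s$ has length $2w - 2|t|$ and the integrand depends only on $t$, collapsing $I$ to the one-dimensional integral $\int (w - |t|)/\sqrt{g_s+t}\, dt$ over the effective range $t \in [\max(-w,-g_s),w]$. Bounding $w - |t| \le w$ and integrating gives $I \le 2w\bigl(\sqrt{g_s+w} - \sqrt{\max(g_s-w,0)}\bigr)$. A short case split on the sign of $g_s$ then yields $I = O(w^2/\sqrt{g+w})$ uniformly: when $g_s > 0$, use $\sqrt{g+w} - \sqrt{g-w} \le 2w/\sqrt{g+w}$; when $g_s \le 0$ (so $g=0$ and $g+w = w$), use $\sqrt{g_s+w} \le \sqrt{w}$.

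Combining the estimates gives $\mathrm{Area} = O\bigl((n/\sqrt{d}) \cdot (w^2/\sqrt{g+w})\bigr) = O\bigl(nw\sqrt{w^2/((g+w)d)}\bigr)$, as claimed. The main obstacle is the case split at $g_s = 0$: when the two inner circles no longer meet (Case B, $g_s \le 0$) the annuli still intersect through their widths, so the effective integration domain is qualitatively different from the generic case $g_s > 0$, and it takes care to see that both regimes yield the same asymptotic estimate for $I$ and thus the unified bound in terms of $g = \max(g_s,0)$.
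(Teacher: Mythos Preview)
Your proof is correct and takes a genuinely different route from the paper's.

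The paper argues geometrically: it splits on whether $d \le r_2 - r_1 + 2w$ (intersection is two ``triangle-like'' regions) or $d > r_2 - r_1 + 2w$ (two ``quadrilateral-like'' regions). In the first case it applies Heron's formula to the triangle $PO_1O_2$ to bound a height $h$, then argues the region fits in a strip of area $O(wh)$. In the second case it encloses each piece in a partial annulus of area $\Theta(w\cdot|FH|)$ and proves a separate lemma bounding the diagonal $|BD|$ via explicit coordinates for the $x$-extent and another Heron computation for the $y$-extent. The two cases are stitched together to give the stated bound.

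Your approach replaces all of this geometric casework with a single change of variables: passing to the squared-distance coordinates $(u,v)$ turns the intersection into a rectangle, and the Heron identity $4d^2y^2=a_1a_2a_3a_4$ appears as the Jacobian rather than as an area formula for a particular triangle. The hypotheses $r_1+w\le r_2$ and $d<r_2$ are used exactly once each, to force $a_1\ge d$ and $a_2,a_4=\Theta(n)$, after which everything collapses to the one-variable integral in $a_3$. Your final case split on the sign of $g_s$ is purely algebraic and corresponds to, but is much lighter than, the paper's geometric dichotomy. One small wrinkle in your write-up: when $0<g_s<w$ you cannot literally invoke $\sqrt{g+w}-\sqrt{g-w}$, but the bound $\sqrt{g+w}=O(w/\sqrt{g+w})$ still holds there since $g+w<2w$, so the conclusion is unaffected.

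What each buys: the paper's argument gives a clear picture of the intersection shape in each regime, which is helpful elsewhere in the paper (e.g.\ in the proof of the $\ell$-wise intersection lemma). Your argument is shorter, avoids the auxiliary diagonal lemma entirely, and makes the dependence on the four triangle-inequality factors transparent from the start.
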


\begin{proof-sketch}
For the complete proof see Appendix~\ref{sec:proof-general-ring-int}.
When $w \le d \le r_2 - r_1 + 2w$, the intersection region consists of two triangle-like regions.
We only bound the triangle-like region $\tilde{\triangle} PQR$ in the upper half annuli as shown in Figure~\ref{fig:maxintring}.
We can show that its area is asymptotically upper bounded by the product of its base length $|QR|=O(w)$ and its height $h$.
We bound $h$ by observing that $\frac{hd}{2}$ is the area of triangle
$\triangle P O_1 O_2$ but we can also obtain its area of using Heron's formula, given its
three side lengths. 
This gives $h = O(n\sqrt{w/d})$.
Since in this case $g\le 2w$, the intersection area is upper bounded by $O\left(wn\sqrt{\frac{w^2}{(g+w)d}}\right)$ as claimed.
\begin{figure}[H]
  \centering
  \includegraphics[scale=0.5]{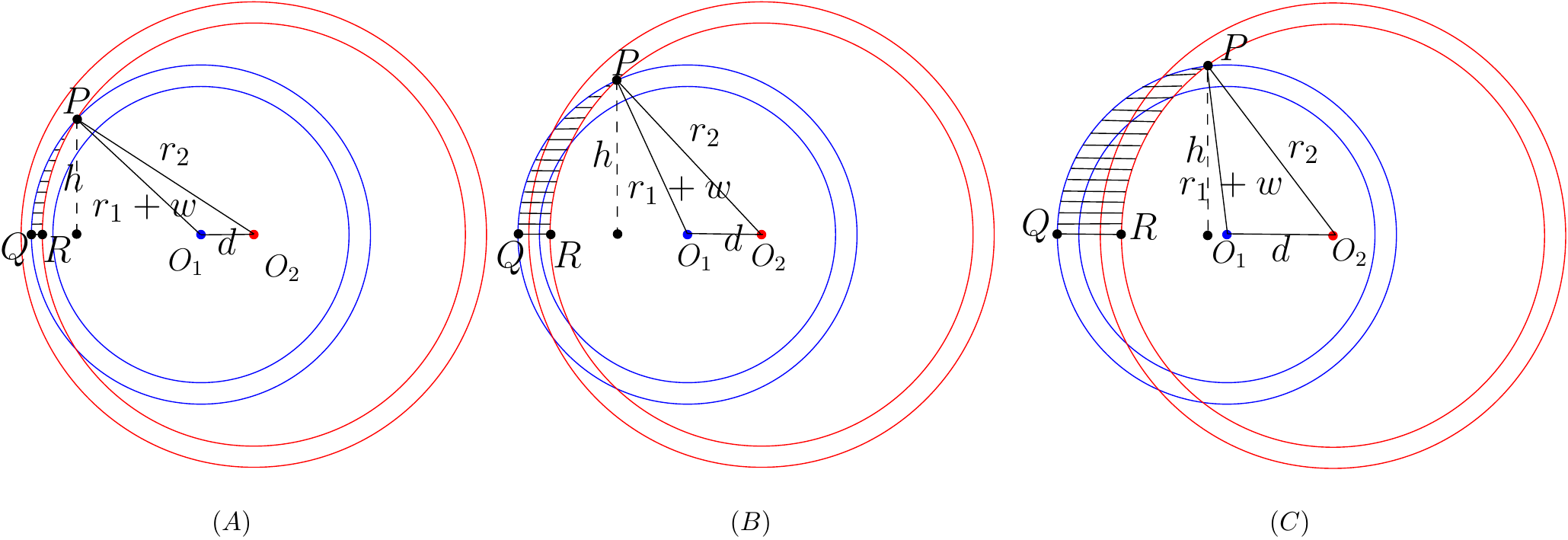}
  \caption{Intersections When $d$ is Small}
  \label{fig:maxintring}
\end{figure}
When $r_2-r_1+2w \le d \le r_2$,
the intersection region consists of two quadrilateral-like regions.
Again we only consider $\tilde{\square} ABCD$ in the upper half of the annuli,
which is contained in a partial annulus, $\tilde{\rR}_{EFHD}$, as shown in Figure~\ref{fig:mid-int-vol}.
\begin{figure}[H]
     \centering
     \begin{minipage}[b]{.4\textwidth}
         \centering
         \includegraphics[width=0.7\textwidth]{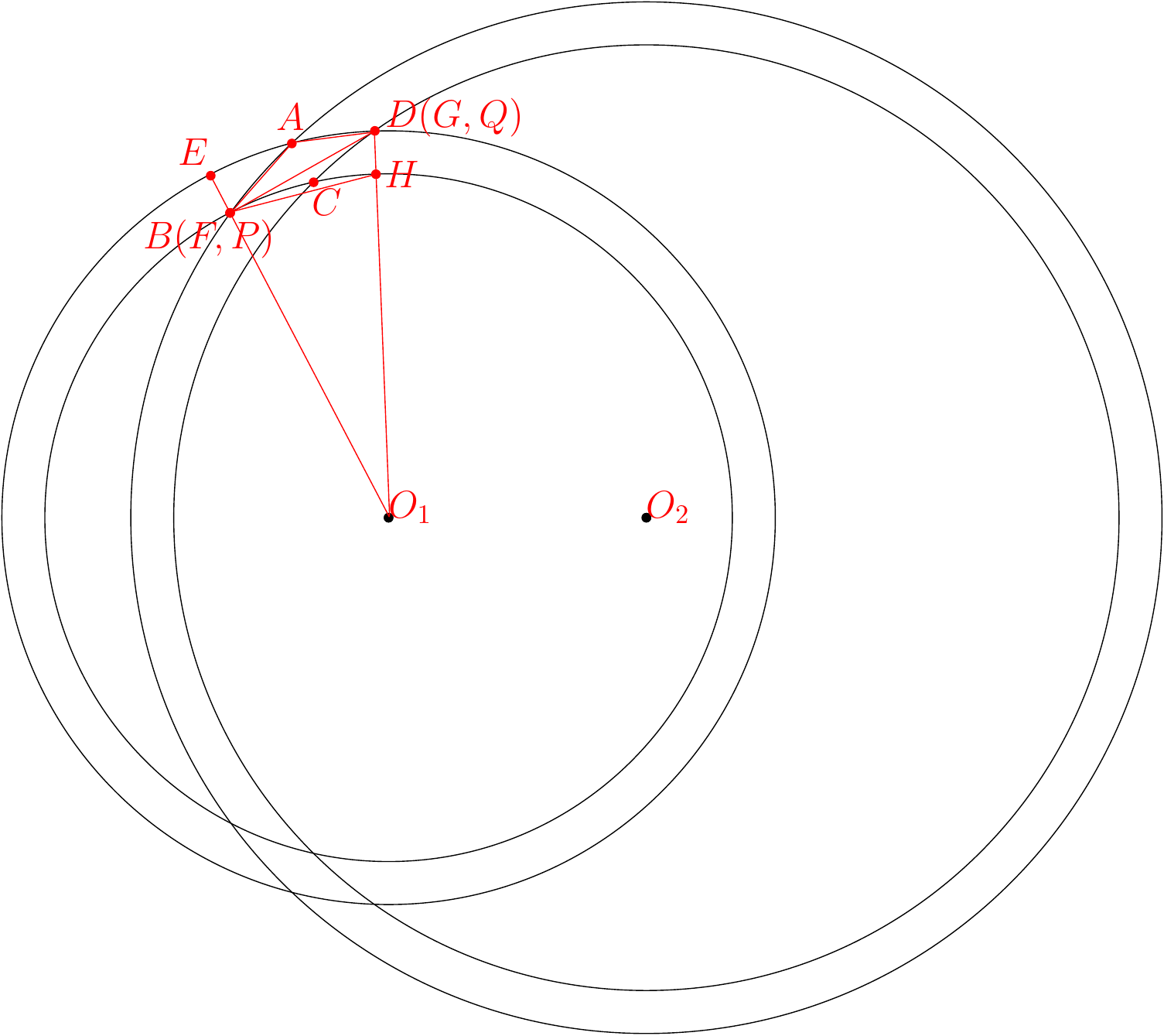}
         \subcaption{Cover an Intersection by A Partial Annulus}
         \label{fig:mid-int-vol}
     \end{minipage}
     \hfill
     \begin{minipage}[b]{.4\textwidth}
         \centering
         \includegraphics[width=0.7\textwidth]{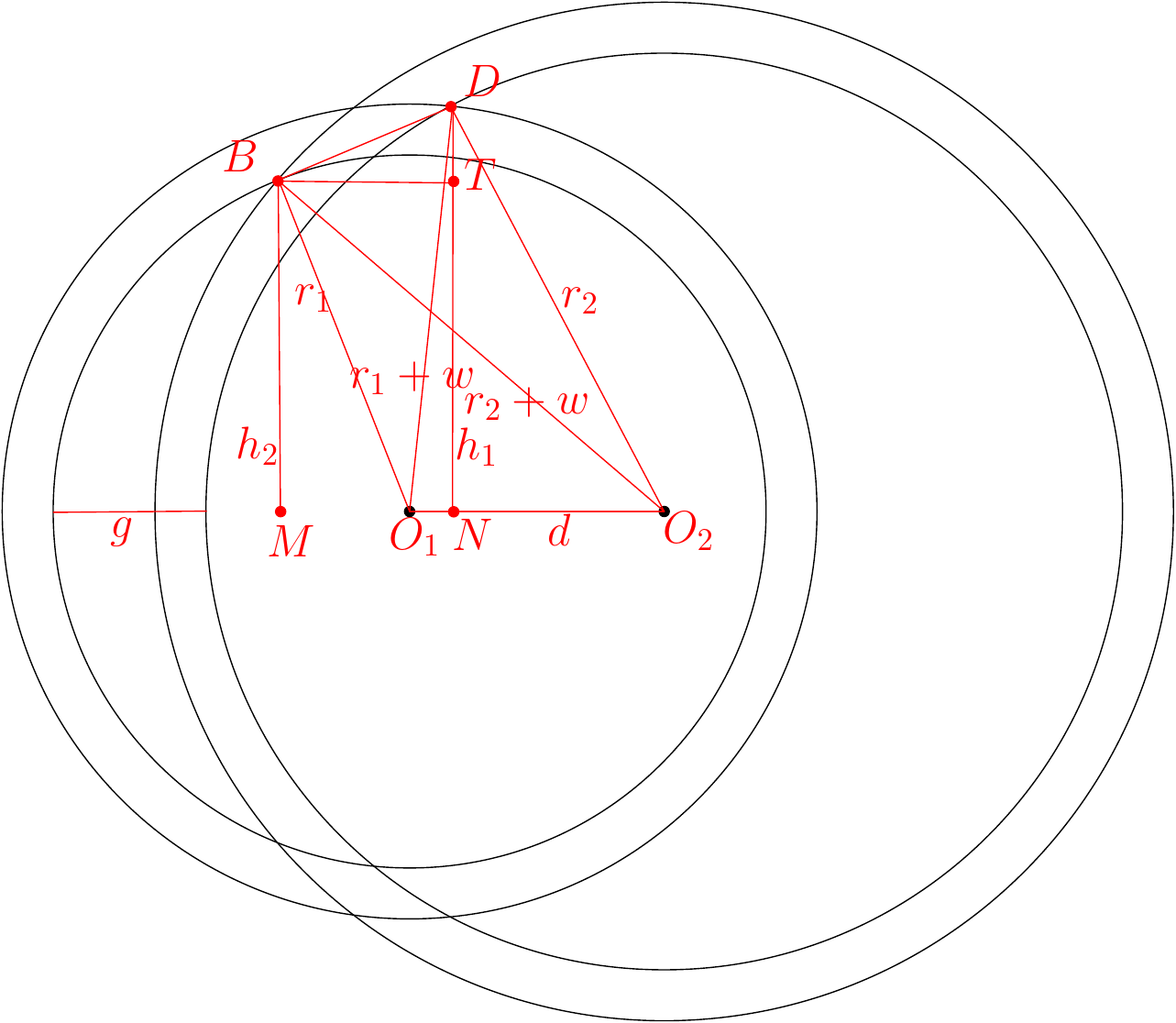}
         \subcaption{Bound the Length of $|BD|$}
         \label{fig:bound-bd}
     \end{minipage}
\caption{Cover a Quadrilateral-like Region by a Partial Annulus}
\label{fig:mid-int}
\end{figure}
We show the area of $\tilde{\rR}_{EFHD}$ is asymptotically bounded by $|BH|\cdot w$,
where $|BH|$ is the distance between the two endpoints of the inner arc.
We upper bound $|BH|$ by $|BD|$.
We use the algebraic representation of the two annuli,
to bound the length of the projection of $BD$ on the $X$-axis
by $\Theta\left(\frac{wn}{d}\right)$; See Figure~\ref{fig:bound-bd}.
We use Heron's formula to bound the length of the projection of $BD$ on the $Y$-axis
by $O\left(n\sqrt{w^2/dg}\right)$.
The maximum of the length of the two projections yields the claimed bound.
\end{proof-sketch}

We use Chazelle's framework to obtain a lower bound for 2D annulus reporting.
Let $S_1$ and $S_2$ be two squares of side length $n$ that are placed $10n$ distance apart and
$S_2$ is directly to the left of $S_1$. 
We generate the annuli  as follows. 
We divide $S_1$ into a $\frac{n}{T}\times\frac{n}{T}$ grid 
where each cell is a square of side length $T$.
For each grid point, we construct a series of circles as follows.
Let $O$ be a grid point.
The first circle generated for $O$ must pass through a corner of $S_2$ and not intersect 
the right side of $S_2$, as shown in Figure~\ref{fig:2d-ring-family}.
Then we create a series of circles centered at $O$ by increasing the radius by increments of $w$, for some $w<T$,
as long as it does not  intersect the left side of $S_2$.
Every consecutive two circles defines an annulus centered on $O$.
We repeat this for every grid cell in $S_1$ and this makes up our set of queries.
The input points are placed uniformly randomly inside $S_2$.
\begin{figure}[H]
  \centering
  \includegraphics[scale=0.7]{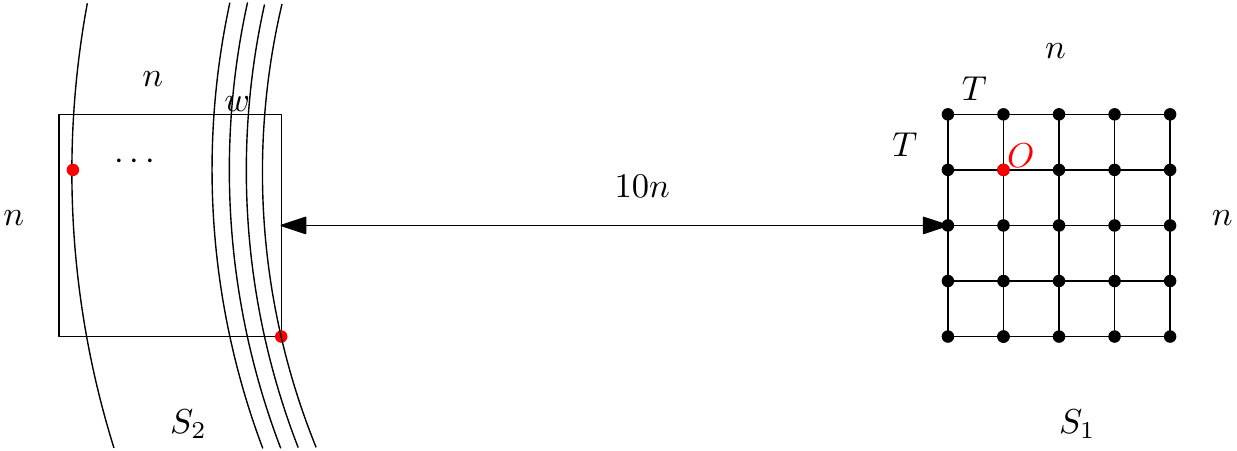}
  \caption{Generate a Family of Annuli at Point $O$}
  \label{fig:2d-ring-family}
\end{figure}

We now show that for the annuli we constructed,
the intersection of $\ell$ annuli is not too large,
for some $\ell$ we specify later.
More precisely we prove the following. 

\begin{restatable}{lemma}{numintring}
\label{lem:num-int-ring}
    There exists a large enough constant $c$ such that 
    in any subset of $\ell=cw^2/\sqrt{T}$ annuli, we can find two annuli such 
    that their intersection has area 
    $O\left(nw\sqrt{\frac{1}{T}}\right)$.
\end{restatable}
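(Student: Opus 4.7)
The plan is to use Lemma~\ref{lem:general-ring-int}: the intersection area of two annuli is at most $O(w^2n/\sqrt{(g+w)d})$, where $d$ is the distance between their centers and $g=\max\{d-|r_i-r_j|,0\}$ for inner radii $r_i,r_j$. So an intersection area bounded by $O(nw/\sqrt{T})$ follows whenever $(g+w)d\ge w^2T$. I call such a pair \emph{good} and the opposite \emph{bad}; the lemma amounts to proving that no \emph{bad clique} (a set of pairwise bad annuli) can contain $\ell=cw^2/\sqrt{T}$ annuli for a suitably large constant $c$.

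I first extract three structural constraints on a bad clique. Every bad pair has $d<wT$, otherwise $(g+w)d\ge wd\ge w^2T$; thus all centers of a bad clique lie in a $wT\times wT$ window of the grid on $S_1$. Second, concentric annuli are disjoint and hence form a good pair, so each grid point contributes at most one annulus to a bad clique. Third, in every bad pair $|r_i-r_j|$ lies within an interval of length $O(w^2T/d_{ij})$ around $d_{ij}$, which makes the map from centers to inner radii an approximate isometry from a subset of $\bR^2$ into $\bR$, with slack $O(w^2T/d)$.

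Since $\bR^2$ does not embed isometrically in $\bR$, the third property forces the clique's centers to lie near a single line. To quantify this, I will pick any three clique centers $p_1,p_2,p_3$, apply the approximate isometry along $p_1\to p_2\to p_3$ versus $p_1\to p_3$, and obtain the triangle-excess estimate
\[
d_{12}+d_{23}-d_{13}\;\le\; w^2T\!\left(\frac{1}{d_{12}}+\frac{1}{d_{23}}+\frac{1}{d_{13}}\right),
\]
whose right-hand side is $O(w^2)$ since every $d_{ij}\ge T$. The excess on the left is $\Theta(h^2/d_{13})$ where $h$ is the perpendicular distance from $p_2$ to the line $p_1p_3$, so taking $(p_1,p_3)$ to be the clique's extreme pair yields $h=O(w\sqrt{T})$. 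Consequently all clique centers lie in a strip of width $O(w\sqrt{T})$ and length $O(wT)$, which contains at most $O(w\sqrt{T}\cdot wT/T^2)=O(w^2/\sqrt{T})$ grid points. Combined with the one-annulus-per-grid-point constraint, any bad clique has $O(w^2/\sqrt{T})$ annuli; taking $c$ larger than the hidden constant proves the lemma.

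The main obstacle will be the collinearity step: one must carefully choose the reference pair and control the dependence of the triangle-excess bound on the smallest pairwise distance, so that every clique center sits within $O(w\sqrt{T})$ of a common line (rather than a weaker $O(w^{3/2}\sqrt{T})$ estimate that a careless application would yield), and then verify that lower-order slack terms do not spoil the final grid-point count.
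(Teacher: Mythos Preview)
Your approach is genuinely different from the paper's, and both are viable routes to the same endpoint---showing that the centers of a ``bad clique'' must lie in a box of dimensions $O(wT)\times O(w\sqrt{T})$ on the $T$-grid---but they reach the $O(w\sqrt{T})$ transverse bound by very different means.

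The paper does not use a triangle-excess\,/\,approximate-isometry argument among the centers at all. Instead it fixes a point $P$ in the \emph{common} intersection of all annuli in the clique (this is where the hypothesis that the clique is bad is used in a slightly stronger form than you use it). Since $|PO_i|=r_i+O(w)$ with $r_i=\Theta(n)$, the point $P$ serves as a far-away reference: for one fixed center $O_1$ and any other center $O_2$, Heron's formula on the thin triangle $\triangle PO_1O_2$ (two sides of length $\Theta(n)$, one side of length $d=|O_1O_2|$), combined with the area bound of Lemma~\ref{lem:general-ring-int}, gives directly that the perpendicular distance from $O_2$ to the fixed line $PO_1$ is $o(w\sqrt{T})$. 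No balancing of $1/d_{ij}$ terms is needed, because the reference direction is fixed once and the $\Theta(n)$ sides absorb all the slack.

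Your route---treating $O_i\mapsto r_i$ as an approximate isometry into $\bR$ and deducing near-collinearity of the centers---also works, but the sharp $h=O(w\sqrt{T})$ bound genuinely requires the care you flag. The naive chain $E\le O(w^2)$ together with $E\gtrsim h^2/d_{13}$ only yields $h=O(w^{3/2}\sqrt{T})$. To recover the correct bound you must match the $1/d_{12}+1/d_{23}$ structure on both sides: the lower bound $E\ge \tfrac{h^2}{2}(1/d_{12}+1/d_{23})$ (valid when the foot of $p_2$ lies on segment $p_1p_3$) against the upper bound $E\le Cw^2T(1/d_{12}+1/d_{23}+1/d_{13})$. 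For the latter inequality to hold as written you need $r_2$ to lie between $r_1$ and $r_3$, which is guaranteed if you take $(p_1,p_3)$ to be the pair with extreme \emph{radii} rather than the pair at maximal center-distance; one then checks $d_{13}\ge d_{12}-O(w^2)=\Omega(d_{12})$ (using $d_{ij}\ge T\gg w^2$), and the comparison gives $h^2=O(w^2T)$. So your plan is sound and in fact slightly more general (it does not assume the common intersection is nonempty), but the paper's use of the far point $P$ sidesteps exactly the obstacle you anticipate, at the cost of that extra assumption.
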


\begin{proof-sketch}
For the complete proof see Appendix~\ref{sec:proof-num-int-ring}.
Let $\sS$ be a set of $\ell=cw^2/\sqrt{T}$ annuli.
Suppose for the sake of contradiction that we cannot find two annuli in $\sS$
whose intersection area is $O\left(nw\sqrt{\frac{1}{T}}\right)$.
Since by Lemma~\ref{lem:general-ring-int},
the intersection area of any two annuli in our construction 
with distance $\Omega(wT)$ is $O\left(nw\sqrt{\frac{1}{T}}\right)$.
The maximum distance between any two annuli in $\sS$ must be $o(wT)$.

\begin{figure}[H]
  \centering
  \includegraphics[scale=0.4]{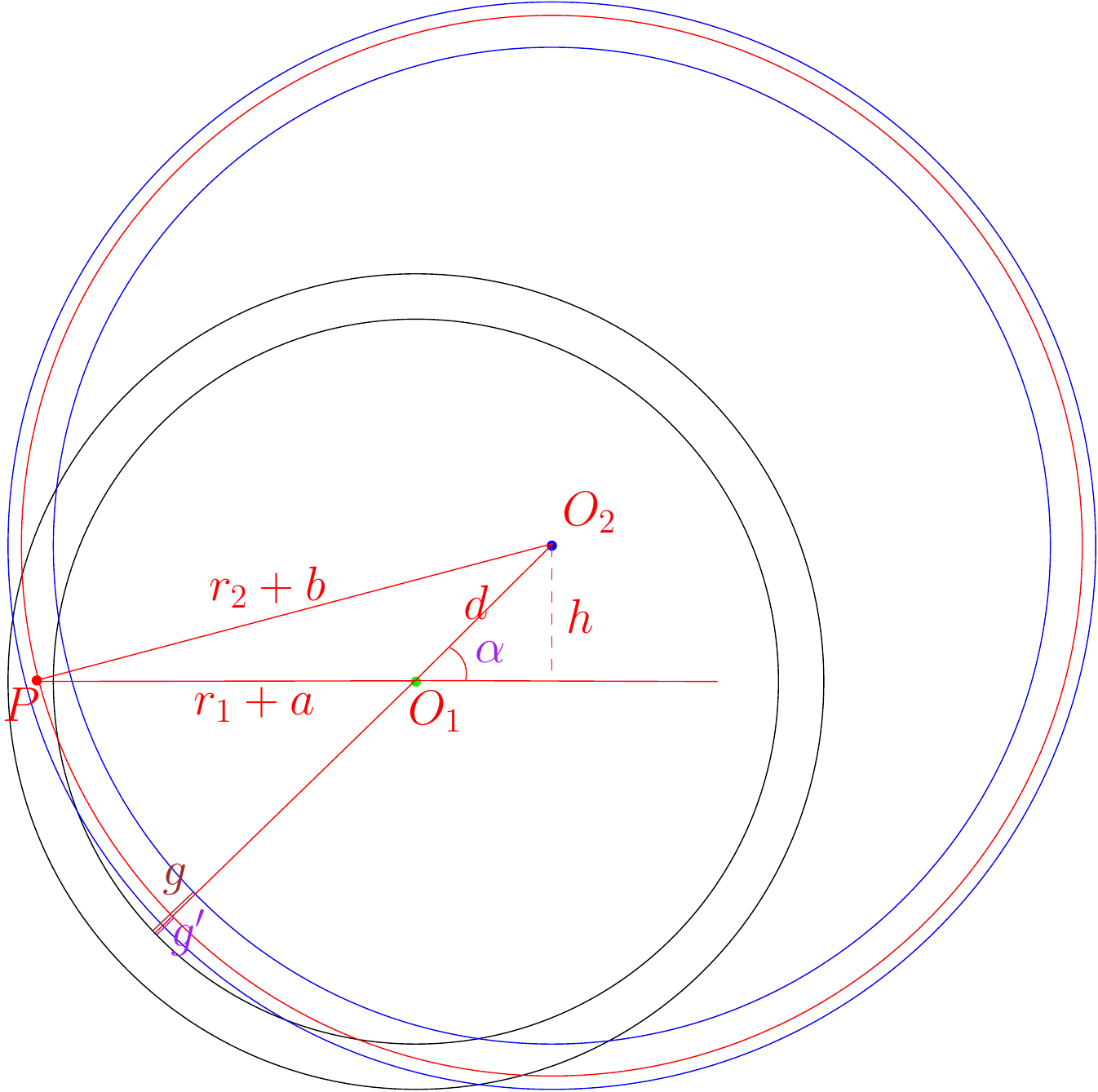}
  \caption{Intersection of Two Annuli}
  \label{fig:bound-region}
\end{figure}

Let $P$ be a point in the intersection of annuli in $\sS$.
Consider an arbitrary annulus $\rR_1 \in \sS$ centered at $O_1$
and another annulus $\rR_2 \in \sS$ centered at $O_2$ for some $O_2 \notin PO_1$.
For $\rR_1, \rR_2$ to contain $P$, we must have $|PO_1|=r_1+a, |PO_2|=r_2+b$ for $0 \le a, b \le w$.
See Figure~\ref{fig:bound-region} for an example.
Also $|O_1O_2|=d$,
by exploiting the shape of $\triangle PO_1O_2$
and applying Lemma~\ref{lem:general-ring-int},
we can compute an upper bound for the distance between $O_2$ and $PO_1$,
namely, $h=d\sin\alpha=o(w\sqrt{T})$,
where $\alpha$ is the angle between $O_1O_2$ and $PO_1$.
This implies that $\sS$ must fit in a rectangle of size $o(wT)\times o(w\sqrt{T})$.
Since the gird cell size is $T\times T$,
only $o(w^2/\sqrt{T})$ annuli are contained in such a rectangle,
a contradiction.
\ignore{
Suppose for the sake of contradiction that
there exist a set of $\Omega\left(w^2/\sqrt{T}\right)$ of annuli
such that their intersection area is $\omega\left(nw\sqrt{\frac{1}{T}}\right)$.
Consider an arbitrary annulus $\rR_0$ centered at $O_1$ with another $cw^2/\sqrt{T}-1$ annulus intersecting $\rR_0$
at some region with area $\omega\left(nw\sqrt{1/T}\right)$ for some positive constant $c$.
Consider an arbitrary point $P$ in the intersection region.
Any annulus containing this region must has one concentric circle 
passing through this point as shown in Figure~\ref{fig:small-d-int-region}.
If the center of any of these $cw^2/\sqrt{T}$ annuli is of distance $\Omega(wT)$ from the center of $\rR_0$,
then by Lemma~\ref{lem:general-ring-int},
the intersection area is $O(nw\sqrt{1/T})$,
which violates our requirement.
So the distance between any two centers is $o(wT)$.

\begin{figure}
     \centering
     \begin{minipage}[b]{.4\textwidth}
         \centering
         \includegraphics[width=0.9\textwidth]{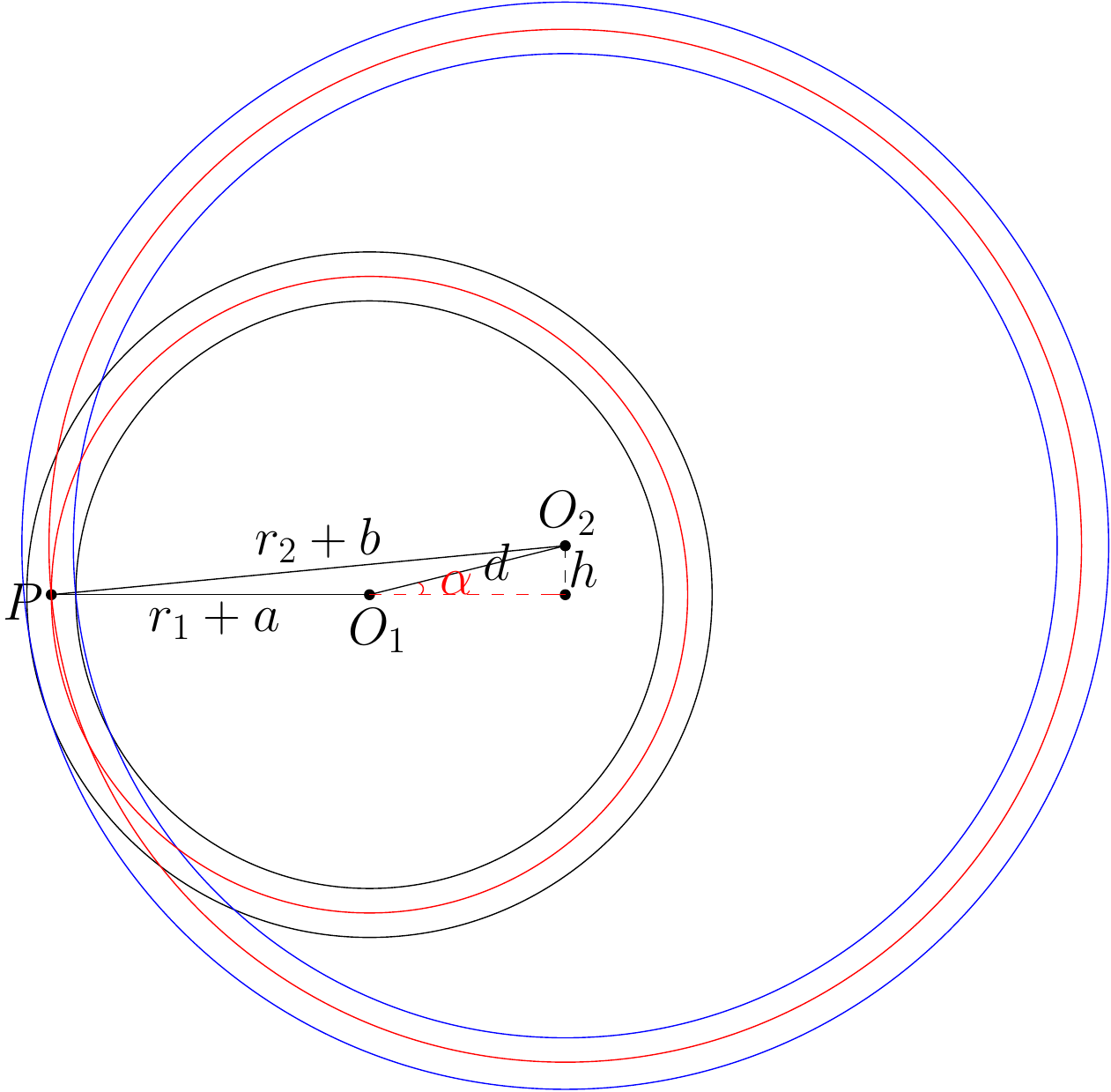}
         \subcaption{Small $\alpha$}
         \label{fig:small-d-int-region}
     \end{minipage}
     \hfill
     \begin{minipage}[b]{.4\textwidth}
         \centering
         \includegraphics[width=0.9\textwidth]{bound-region}
         \subcaption{Large $\alpha$}
         \label{fig:bound-region}
     \end{minipage}
\caption{Intersection of Two Annuli}
\label{fig:mid-int}
\end{figure}

Consider a point $O_2$ not on line $PO_1$.
Let $\rR^*$ be any annulus centered at $O_2$ containing $P$ and intersecting $\rR_0$ with area $\omega(nw\sqrt{1/T})$.
By combining Lemma~\ref{lem:general-ring-int} and the shape of $\triangle{PO_1O_2}$,
we get an upper bound for $h=d\cos\alpha=o(w\sqrt{T})$ as in Figure~\ref{fig:bound-region}.
Therefore, for $cw^2/\sqrt{T}$ annuli to intersect with intersection area $\omega(nw\sqrt{1/T})$,
their centers must lie in a rectangle of size $o(wT)\times o(w\sqrt{T})$.
Since the grid size is $T\times T$,
we only have $o(w^2/\sqrt{T})$ centers, i.e., $o(w^2/\sqrt{T})$ annuli in this rectangle, a contradiction.
}
\end{proof-sketch}

We are now ready to plug in some parameters in our construction. 
We set $T=w^22^{2\sqrt{\log n}}$.
First, we claim  that from each grid cell $O$, we can draw $\Theta(n/w)$ circles;
Let $C_1, C_2, C_3$, and $C_4$ be the corners of $S_2$ sorted increasingly according to their
distance to $O$.
As $S_1$ and $S_2$ are placed $10n$ distance apart, an elementary geometric calculation reveals that
$C_1$ and $C_2$ are vertices of the right edge of $S_2$, meaning, the smallest circle that we draw
from $O$ passes through $C_2$ and we keep drawing circles, by incrementing their radii by $w$
until we are about to draw a circle that is about to contain $C_3$.
We can see that $|OC_3| - |OC_2| = \Theta(n)$ and thus we draw
$\Theta(n/w)$ circles from $O$. 
As we have $\Theta((n/T)^2)$ grid cells, it thus follows that we have
$\Theta(n^3 / (T^2w))$ annuli in our construction. 

Also by our construction, the area of each annulus within $S_2$ is 
$\Theta(wn)$.
To see this,
let $P$ be an arbitrary point in $S_1$,
let $A, B$ be the intersections of some circle centered at $P$ as in Figure~\ref{fig:2d-ring-angle}.
\begin{figure}[H]
  \centering
  \includegraphics[scale=0.8]{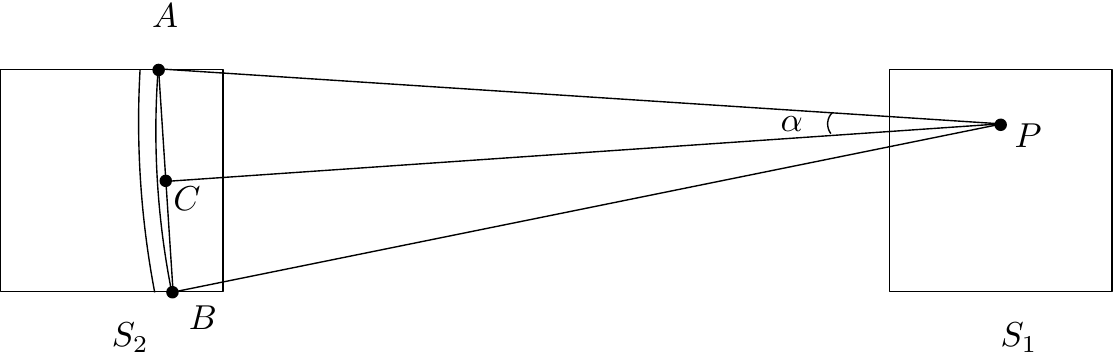}
  \caption{The Angle of an Annulus}
  \label{fig:2d-ring-angle}
\end{figure}
We connect $AB$ and let $C$ be the center of $AB$.
Let $\alpha=\angle{APC}$.
In the triangle $\triangle ABP$, all the sides are within constant factors of each other
and thus $\alpha=\Theta(1)$ and so the area of the annulus inside $S_2$ is at least a constant
fraction of the area of the entire annulus. 

Suppose we have a data structure that answers 2D annulus reporting queries in $Q(n)+O(k)$ time.
We set $w=c'Q(n)$ for a large enough constant $c'$ such that 
the area of each annulus within $S_2$ is at least $\Theta(w n)>8nQ(n)$.
By Lemma~\ref{lem:derand-ring}, if we sample $n$ points uniformly at random in $S_2$,
then with probability more than $1/2$, each annulus contains at least $Q(n)$ points.

Also by our construction, 
the total number of intersections of two annuli is bounded by $O(n^6)$
and by our choice of $T$, $O\left(nw\sqrt\frac{1}{T}\right)=O\left(\frac{n}{2^{\sqrt{\log n}}}\right)$.
Then by Lemma~\ref{lem:derand-int} and Lemma~\ref{lem:num-int-ring}, with probability $>\frac{1}{2}$,
a point set of size $n$ picked uniformly at random in $S_2$
satisfies that the number of points in any of the intersection of $cw^2/\sqrt{T}$ annuli
is no more than $9\sqrt{\log n}$.

Now by union bound,
there exist $\Theta\left(\frac{n^3}{wT^2}\right)$ point sets
such that each set is the output of some 2D annulus query
and each set contains at least $Q(n)$ points.
Furthermore, the intersection of any $cw^2/\sqrt{T}$ sets is bounded by $9\sqrt{\log n}$.
Then by Theorem~\ref{thm:chazelle-framework}, we obtain a lower bound of
\[
S(n)=\Omega\left(\frac{Q(n)n^3\sqrt{T}}{wT^2w^22^{O\left(\sqrt{\log n}\right)}}\right)
=\domega\left(\frac{n^3}{Q(n)^5}\right).
\]

This proves the following theorem about 2D annulus reporting.

\begin{theorem}
\label{thm:2d-ring-lb}
Any data structure that solves 2D annulus reporting on point set of size $n$ with query time $Q(n)+O(k)$,
where $k$ is the output size,
must use $\domega\left({n^3}/{Q(n)^5}\right)$ space.
\end{theorem}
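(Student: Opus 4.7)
The plan is to apply Chazelle's framework (Theorem~\ref{thm:chazelle-framework}) by building a random input instance in which (i) many annulus queries each return $\Omega(Q(n))$ points, and (ii) no small family of these queries has a large common intersection. The whole trick is to engineer the annuli so that any two of them intersect in a very thin region, since then a uniformly random point set will, in expectation, hit each annulus heavily but hit pairwise (and higher) intersections very lightly. Once these two bounds are in place, plugging them into Theorem~\ref{thm:chazelle-framework} and applying the derandomization machinery (Lemmas~\ref{lem:derand-int} and~\ref{lem:derand-ring}) yields the claimed $\domega(n^3/Q(n)^5)$ space lower bound.

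For the construction, I would place two axis-aligned squares $S_1,S_2$ of side $n$ at distance $\Theta(n)$ apart, use $S_2$ as the ``point square'' and use $S_1$ as a source of centers. In $S_1$ I would lay down an $(n/T)\times(n/T)$ grid for some parameter $T$, and at each grid point $O$ stack a family of concentric annuli of common width $w$, with radii incremented by $w$, whose radii are tuned so that every annulus cuts through $S_2$. An elementary calculation shows each such annulus meets $S_2$ in an angular sector of constant angle, so its area inside $S_2$ is $\Theta(wn)$; choosing $w=\Theta(Q(n))$ then lets Lemma~\ref{lem:derand-ring} guarantee at least $Q(n)$ random points per annulus with probability $>1/2$. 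The total query count is $\Theta(n^3/(T^2w))$, which is where the first factor of the final bound comes from.

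The crux of the argument, and the step I expect to be the most delicate, is bounding the pairwise intersection area and then deducing an upper bound on how many annuli can share a large common region. I would first prove a geometric lemma (Lemma~\ref{lem:general-ring-int}) that upper bounds the intersection area of two width-$w$ annuli with radii $\Theta(n)$ whose centers are $d$ apart; the two regimes are $d$ small (intersection is two triangle-like slivers) and $d$ moderate (intersection is two quadrilateral-like slivers), and in both cases Heron's formula applied to the triangle formed by the two centers and a boundary intersection gives the heights, yielding an area bound of roughly $O(wn\sqrt{w/d})$. Then, in Lemma~\ref{lem:num-int-ring}, I would argue that if more than $\Theta(w^2/\sqrt{T})$ of our annuli all intersected in a region of area $\omega(nw/\sqrt{T})$, then by the area bound their centers would all have to lie in a rectangle of dimensions $o(wT)\times o(w\sqrt{T})$ in the grid, which is too small to host that many grid points, a contradiction.

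Finally I would set $T=w^2\cdot 2^{2\sqrt{\log n}}$ so that the pairwise intersection area is $O(n/2^{\sqrt{\log n}})$, matching the hypothesis of Lemma~\ref{lem:derand-int}, which then guarantees that with probability $>1/2$ every intersection of $\Theta(w^2/\sqrt{T})$ annuli contains at most $O(\sqrt{\log n})$ random points. Combining the two probabilistic events by a union bound produces, in the worst case, a point set realizing roughly $\Theta(n^3/(wT^2))$ queries with $\ge Q(n)$ output each and with $\alpha=\Theta(w^2/\sqrt{T})$-fold intersections of size $c=O(\sqrt{\log n})$. Feeding these into Theorem~\ref{thm:chazelle-framework} gives
\[
S(n)=\Omega\!\left(\frac{Q(n)\cdot n^3/(wT^2)}{(w^2/\sqrt{T})\cdot 2^{O(\sqrt{\log n})}}\right)
=\domega\!\left(\frac{n^3}{Q(n)^5}\right),
\]
after substituting $w=\Theta(Q(n))$ and $T=w^2\,2^{2\sqrt{\log n}}$ and absorbing the $2^{O(\sqrt{\log n})}=n^{o(1)}$ factors. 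The main obstacle throughout is purely the geometric lemma on annulus intersection areas, since Chazelle's framework and the derandomization step are essentially mechanical once that bound is in hand.
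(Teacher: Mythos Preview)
Your proposal is correct and follows essentially the same approach as the paper: the two-square construction with a $T$-grid of centers in $S_1$, stacked width-$w$ annuli hitting $S_2$, the geometric bound of Lemma~\ref{lem:general-ring-int} on pairwise intersection area, the packing argument of Lemma~\ref{lem:num-int-ring} giving $\alpha=\Theta(w^2/\sqrt{T})$, the parameter choices $w=\Theta(Q(n))$ and $T=w^2 2^{2\sqrt{\log n}}$, derandomization via Lemmas~\ref{lem:derand-int} and~\ref{lem:derand-ring}, and the same final computation. The only cosmetic difference is that the paper states Lemma~\ref{lem:num-int-ring} as ``among any $\ell$ annuli there exist two with small pairwise intersection,'' whereas you phrase it via the contrapositive (a large common region forces the centers into a small box); these are equivalent and the proof sketch you give is the paper's.
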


So for any data structure that solves 2D annulus reporting in time $Q(n)=\log^{O(1)}n$,
Theorem~\ref{thm:2d-ring-lb} implies that $\domega\left(n^{3}\right)$ space must be used.

\subsection{2D Annulus Stabbing}
Modifications similar to those done in Subsection~\ref{subsec:polystab}
can be used to obtain the following lower bound. 

\begin{theorem}
\label{thm:ring-stab}
Any data structure that solves 
the 2D annulus stabbing problem with query time $Q(n)+O(k)$, where $k$ is the output size,
must use $S(n)=\Omega(n^{3/2}/Q(n)^{3/4})$ space.
\end{theorem}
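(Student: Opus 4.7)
The plan is to apply Afshani's framework (Theorem~\ref{thm:afshani-framework}) to a deterministic construction of $n$ input annuli obtained by suitably rescaling the annulus family used in the reporting lower bound (Theorem~\ref{thm:2d-ring-lb}) so that it lives in a unit square rather than an $n \times n$ square. Since the stabbing framework does not require derandomization, the pairwise area estimate from Lemma~\ref{lem:general-ring-int} can be fed directly into the $S(n)=\Omega(t/v)$ inequality, and no analogue of Lemma~\ref{lem:num-int-ring} is needed.

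I would set up two unit-size regions $S_1,S_2$ placed at constant distance from each other. In $S_1$ I place a grid of side length $T$ (so roughly $1/T^2$ centers), and from each grid point I draw a sequence of concentric circles whose radii grow by $w$ each time, for radii spanning a $\Theta(1)$ range so that the corresponding annuli sweep across $S_2$. This yields $\Theta(1/(wT^2))$ annuli in total and, as in the reporting construction, \emph{exactly one} annulus per grid cell contains any given point $q \in S_2$, so every point of $S_2$ is stabbed by $t=\Theta(1/T^2)$ annuli. I then choose $T$ and $w$ to satisfy the two constraints that govern the whole argument: (a) the total number of annuli equals $n$, giving $wT^2 = 1/n$, and (b) $t \ge Q(n)$, giving $T \le 1/\sqrt{Q(n)}$. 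The natural choice is $T=1/\sqrt{Q(n)}$ and consequently $w=Q(n)/n$.

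For the pairwise intersection area I would use Lemma~\ref{lem:general-ring-int} applied in the unit-scale setting (so the factor $n$ in its statement becomes a constant). The key points are that (i) two annuli with the \emph{same} center are disjoint by construction, hence contribute area $0$, so only pairs from distinct grid points need to be examined, and (ii) distinct grid centers are separated by $d \ge T$, and our choice of parameters forces $w \le T$, placing us inside the regime where Lemma~\ref{lem:general-ring-int} is applicable. The worst case of the bound is $g = O(w)$, which produces a pairwise area of $v = O(w^{3/2}/\sqrt{T})$; substituting $w = Q(n)/n$ and $T = 1/\sqrt{Q(n)}$ yields $v = O(Q(n)^{7/4}/n^{3/2})$. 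Plugging $t$ and $v$ into Theorem~\ref{thm:afshani-framework} with $\alpha = 2$ gives
\[
S(n) \;=\; \Omega\!\left(\frac{t}{v}\right) \;=\; \Omega\!\left(\frac{Q(n)}{Q(n)^{7/4}/n^{3/2}}\right) \;=\; \Omega\!\left(\frac{n^{3/2}}{Q(n)^{3/4}}\right),
\]
as required.

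The main obstacle will be bookkeeping around Lemma~\ref{lem:general-ring-int}'s hypothesis $w \le d < r_2$. The same-center disjointness observation handles the lower end of $d$ cleanly, but I have to argue that the chosen parameters respect $w \le T$ throughout the nontrivial regime of $Q(n)$ (i.e.\ $Q(n) = o(n^{2/3})$, which is precisely where the claimed lower bound is meaningful), and also verify that each cell of $S_2$ is indeed covered by all $\Theta(1/T^2)$ grid-point annuli — a small but essential geometric check that the radii at each grid point span the distance range from that grid point to $S_2$. Once these details are pinned down, the computation collapses into the single application of Afshani's framework above.
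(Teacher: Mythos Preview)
Your proposal is correct and follows essentially the same approach as the paper: Afshani's framework applied to the unit-scaled annulus family with grid size $T=\Theta(1/\sqrt{Q(n)})$ and width $w=\Theta(Q(n)/n)$, using Lemma~\ref{lem:general-ring-int} to bound the pairwise intersection area by $O(w\sqrt{w/T})=O(Q(n)^{7/4}/n^{3/2})$. The paper carries out exactly the geometric checks you flag as remaining obstacles (coverage of $S_2$ by one annulus per center, and the constraint $w\le T$ forcing $Q(n)=O(n^{2/3})$).
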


\begin{proof}
We use Afshani's lower bound framework as in Theorem~\ref{thm:afshani-framework}.
We construct annuli similar to the way as we did in the proof of Theorem~\ref{thm:2d-ring-lb},
but with some differences.
Let $S_1$ be a unit square,
we decompose $S_1$ into a grid where each grid cell is a square of size $T\times T$,
for some parameter $T$ to be determined later.
Let $S_2$ be another unit square to the left of $S_1$ with distance $10$.
For each grid point in $S_1$,
we generate a family of circles where the first circle is the first one
tangent to the right side of $S_2$.
See Figure~\ref{fig:2d-ring-stabbing} for an example.
\begin{figure}[h]
  \centering
  \includegraphics[scale=0.8]{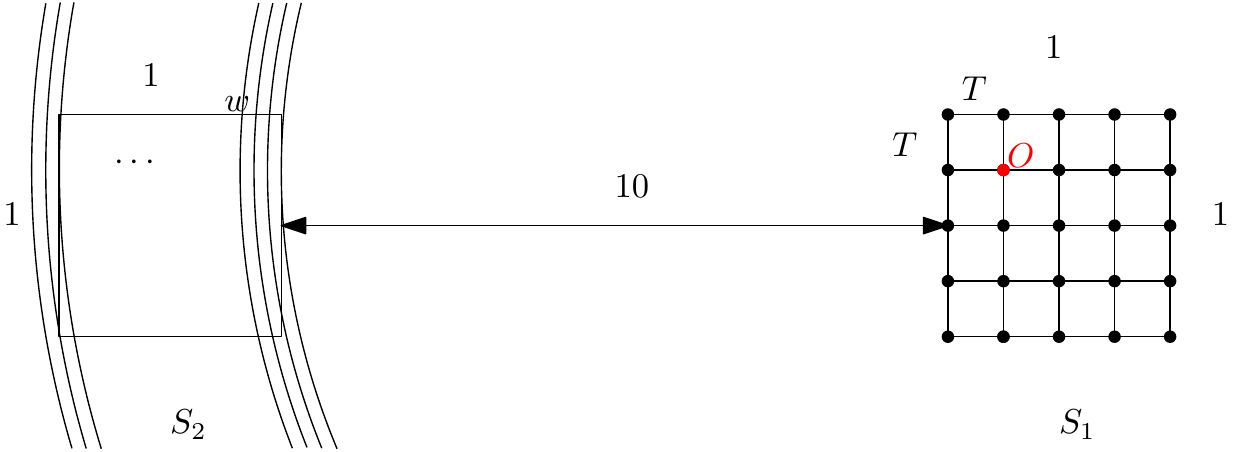}
  \caption{Generate a Family of Annuli at Point $O$}
  \label{fig:2d-ring-stabbing}
\end{figure}
Let $r_0$ be the radius of this circle.
We generate other circles by increasing the radius by $w$ each time.
The last circle is the one with radius at least $r_0+(\sqrt{122}-9)$.
The choice of constant $\sqrt{122}-9$ is due to the following.
Consider a point $P$ in the upper half of $S_1$,
let $a$ (resp. $b$) be the distance from $P$ to the upper (resp. left) side of $S_1$.
The case for the lower half of $S_1$ is symmetric.
Consider the last circle intersecting $S_2$ only at the corners.
See Figure~\ref{fig:2d-ring-stabbing-max-r} for an example.
\begin{figure}
  \centering
  \includegraphics[scale=0.8]{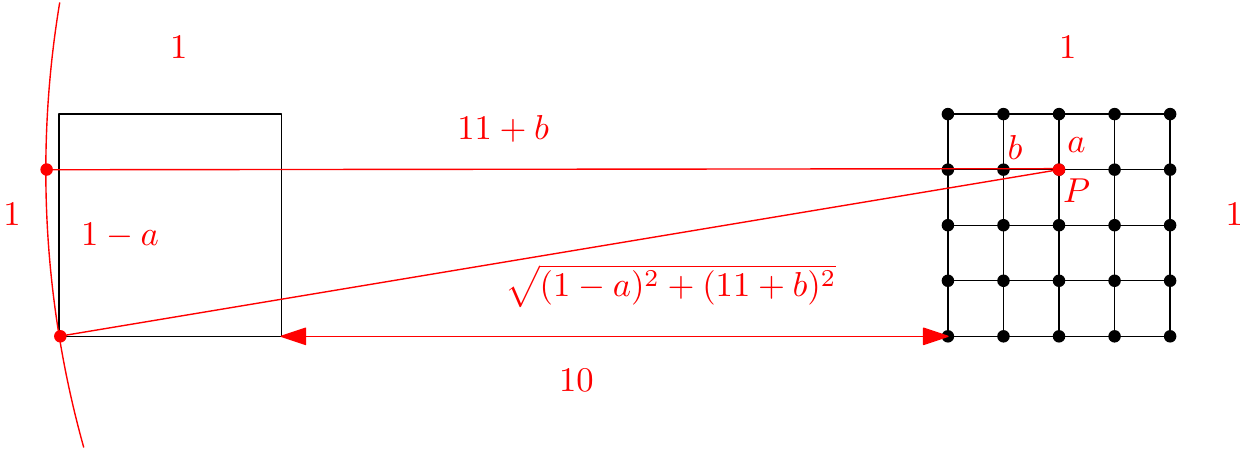}
  \caption{The Last Valid Circle}
  \label{fig:2d-ring-stabbing-max-r}
\end{figure}
The difference of radius lengths between the last circle and the first circle is
\[
f=\sqrt{(1-a)^2+(11+b)^2}-10-b.
\]
Simple analysis shows that $f\le \sqrt{122}-10$.
This shows as long as the last circle has radius at least $\sqrt{122}-9+r_0$,
it is completely outside of $S_2$.
Like we did previously,
we consider the region between two consecutive annuli to be an annulus.
Note that by our construction,
for any point in $S_2$, it is contained in one of the annuli we generated 
for a given grid point.
The total number of annuli we have generated is therefore
\[
\left(\frac{1}{T}+1\right)^2\cdot\frac{\sqrt{122}-9}{w}.
\]

We set $T=1/(2\sqrt{Q(n)}-1)$ and $w=4(\sqrt{122}-9)Q(n)/n$.
Then the total number of annuli we have generated is $n$.
Furthermore, each query point is contained in $t=(1/T+1)^2 \ge Q(n)$ annuli.

To use Lemma~\ref{lem:general-ring-int}, 
we require $w\le T$, which implies $Q(n)=O(n^{2/3})$,
the intersection area of two annuli in our construction
is upper bounded by $v=O(w\sqrt{\frac{w}{T}})=O(Q(n)^{7/4}/n^{3/2})$.
Then by Theorem~\ref{thm:afshani-framework},
\[
S(n)=\Omega\left(\frac{t}{v}\right)=\Omega\left(\frac{n^{3/2}}{Q(n)^{3/4}}\right).
\qedhere
\]
\end{proof}

So for any data structure that solves the 2D annulus stabbing problem
using $O(n)$ space, Theorem~\ref{thm:ring-stab} implies that
its query time must be $Q(n)=\Omega(n^{2/3})$.

\section{Conclusion and Open Problems}

We investigated lower bounds for range searching
with polynomial slabs and annuli in $\bR^2$.
We showed space-time tradeoff bounds of  $S(n)=\domega(n^{\Delta+1}/Q(n)^{(\Delta+3)\Delta/2})$
and $S(n)=\domega(n^{3}/Q(n)^5)$ for them respectively.
Both of these bounds are almost tight in the ``fast query'' case, i.e., when $Q(n)=\log^{O(1)}n$
(up to a $n^{o(1)}$ factor).
This refutes the conjecture of the existence of data structure that can solve
semialgebraic range searching in $\bR^d$ using $\dO(n^d)$ space and $\log^{O(1)}n$ query time.
We also studied the ``dual'' polynomial slab stabbing and annulus stabbing problems.
For these two problems, we obtained lower bounds $S(n)=\Omega(n^{1+2/(\Delta+1)}/Q(n)^{2/\Delta})$
and $S(n)=\Omega(n^{3/2}/Q(n)^{3/4})$ respectively.
These bounds are tight when $S(n)=O(n)$. 
Our work, however, brings out some very interesting open problems. 

To get the lower bounds for the polynomial slabs, 
we only considered univariate polynomials of degree $\Delta$.
In this setting, the number of coefficients is at most $\Delta+1$,
and we have also assumed they are all independent.
It would be interesting to see if similar lower bounds can be obtained
under more general settings. 
In particular, as the maximum number of coefficients of a bivaraite polynomial
of degree $\Delta$ is $\binom{\Delta+2}{2}$, it would interesting to see if
a $\domega(n^{ \binom{\Delta+2}{2}-1})$ space lower bound can be obtained
for the ``fast query'' case. 

It would also be interesting to consider space-time trade-offs. 
For instance, by combining the known ``fast query'' and ``low space'' solutions for  2D annulus
reporting, one can obtain data structures with trade-off curve $S(n) = \tilde{O}(n^3/Q(n)^4)$,
however, our lower bound is $S(n) = \domega(n^3/Q(n)^5)$ and it is not clear which of these
bounds is closer to truth.
For the annulus searching problem in $\bR^2$,
in our lower bound proof, we considered a random input point set,
since in most cases a random point set is the hardest input instance
and our analysis seems to be tight,
we therefore conjecture that our lower bound could be tight, at least
when $Q(n)$  is small enough.
We believe that it should be possible to obtain the trade-off curve
of $S(n) = \tilde{O}(n^3/Q(n)^5)$ when the input
points are uniformly random in the unit square
and $Q(n)$ is not too big.

Finally, another interesting direction is to study the lower bound for the counting variant
of semialgebraic range searching.

\begin{acks}
The authors would like to thank Esther Ezra for sparking the initial ideas behind the proof.
\end{acks}

\bibliography{reference}{}
\bibliographystyle{ACM-Reference-Format}

\appendix

\section{Proof of Lemma~\ref{lem:general-ring-int}}
\label{sec:proof-general-ring-int}
\generalringint*

\begin{proof}
First note that for two annuli to intersect,
$d\ge r_2-r_1-w$.
When $w\le d\le r_2-r_1+2w$,
observe that $g\le 2w$ and
the intersections are all bounded by a two triangle-like regions.
We consider the triangle like region $\tilde\triangle PQR$
in the upper half of the annuli
as shown in Figure~\ref{fig:ap-maxintring}.
The area of the other triangle like region
can be bounded symmetrically.

\begin{figure}[h]
  \centering
  \includegraphics[scale=0.5]{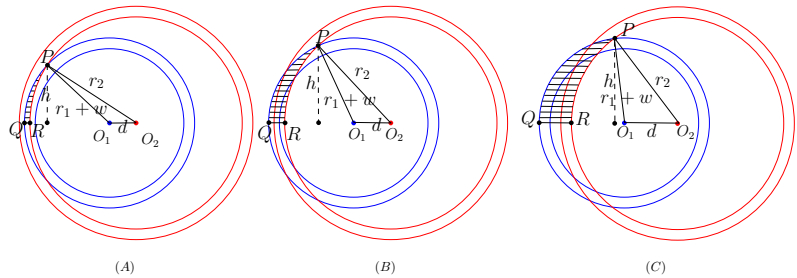}
  \caption{Intersections When $d$ is Small}
  \label{fig:ap-maxintring}
\end{figure}

To upper bound its area, we first compute its ``height'' $h$.
Consider $\triangle PO_1O_2$, by Heron's Formula, its area is

\begin{align*}
A_{\triangle PO_1O_2}
&=\sqrt{\frac{r_1+r_2+w+d}{2}\cdot\frac{r_2+d-r_1-w}{2}\cdot\frac{r_1+d+w-r_2}{2}\cdot\frac{r_1+r_2+w-d}{2}}\\
&=O(\sqrt{r_2\cdot d\cdot w\cdot r_1})\\
&=O\left(n\sqrt{dw}\right),
\end{align*}
where the second equality follows from
\begin{align*}
d\le r_2-r_1+2w~&\implies~ r_1+r_2+w+d\le 2r_2+3w\le 5r_2,\\
r_2-r_1-w\le d~&\implies~r_2+d-r_1-w\le 2d,\\
d\le r_2-r_1+2w~&\implies~r_1+d-r_2+w\le 3w,\\
r_2-r_1-w\le d~&\implies~r_1+r_2+w-d\le2r_1+2w\le4r_1.
\end{align*}

Then we bound $h$ by
\[
h=\frac{2A_{\triangle PO_1O_2}}{d}=O\left(n\sqrt\frac{w}{d}\right).
\]
To bound the area of $\tilde\triangle PQR$,
we move the outer circle of the annulus centered at $O_1$ along $O_1O_2$ such that it passes $R$
as in Figure~\ref{fig:area-bound-init}.
Let $S$ be a point in this new circle such that $PS\parallel QR$.
The area of region $PSRQ$ formed by the outer circle of the annulus centered at $O_1$
and the new circle as well as $PS$ and $QR$, i.e., the shaded region in Figure~\ref{fig:area-bound-init},
is $O(wh)$ by a simple integral argument.
Since the curvature of the inner circle of the annulus centered at $O_2$ is no larger than that of
the outer circle of $O_1$ since $r_2\ge r_1+w$, $\tilde\triangle PQR$ is contained in this region.
So $A_{\tilde\triangle PQR}=O(wh)=O(nw\sqrt{w/d})=O(nw\sqrt{w^2/(g+w)d})$.

\begin{figure}[h]
  \centering
  \includegraphics[scale=0.5]{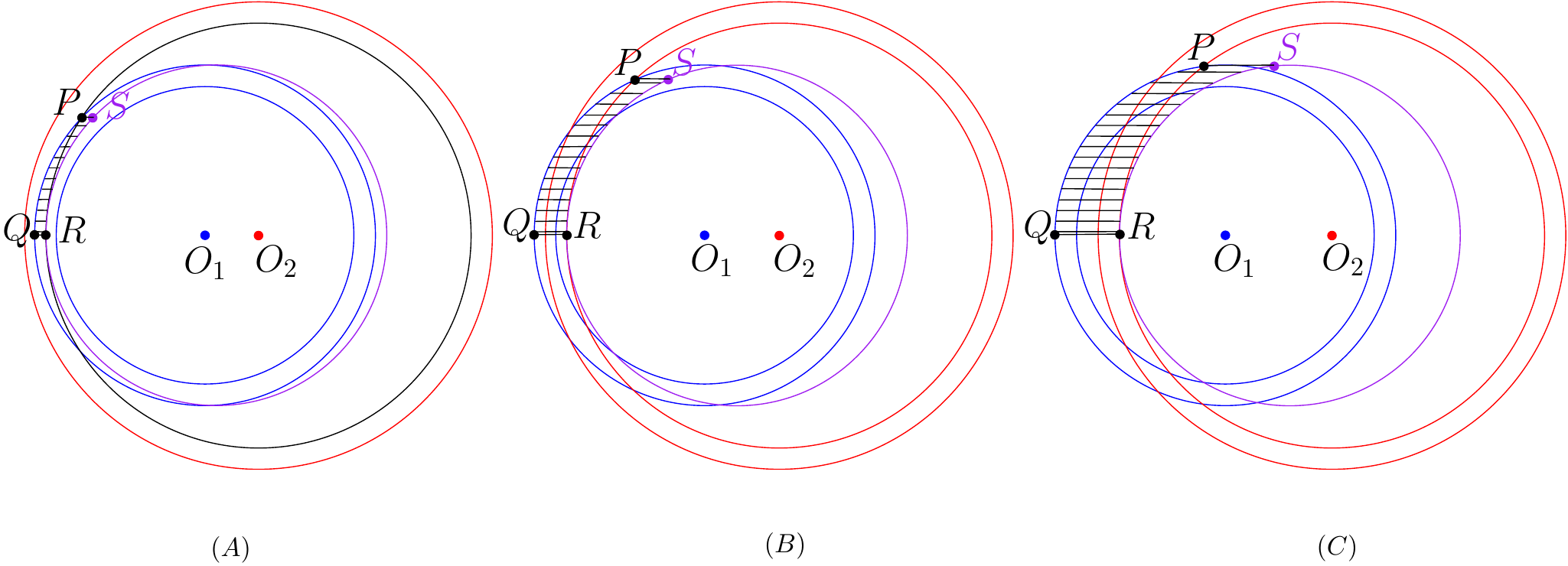}
  \caption{Bound the Intersection Area for Small $d$}
  \label{fig:area-bound-init}
\end{figure}

In the case $r_2-r_1+2w<d<r_2$, the intersections consist of two quadrilateral-like regions,
one at the upper half and the other at the lower half.
In this case, $g>2w$.
We show how to bound the area of the quadrilateral-like region at the upper half,
the other one can be bounded symmetrically.

First note that each quadrilateral-like region $\tilde\square{ABCD}$ is contained in a partial annulus as in Figure~\ref{fig:ap-mid-int}.
We generate this partial annulus by shooting a ray from the center $O_1$
and connect a point $P$ in $\arc{AB}$, such that $\tilde\square ABCD$ is completely to the right of this ray,
to create the left boundary $EF$,
where $E$ is a point in the outer circle of the annulus $\rR_1$ centered at $O_1$
and $F$ is a point in the inner circle of annulus $\rR_1$.
And similarly, shoot another ray and connect a point $Q$ in $\arc{CD}$ 
to create the right boundary $GH$.
Again, $G$ is a point in the outer circle of annulus $\rR_1$
and $H$ is a point in the inner circle of annulus $\rR_1$.
Note that since $d<r_2$,
by elementary geometry,
$P$ can only be $B$ and $Q$ can only be $D$ as in Figure~\ref{fig:ap-mid-int-vol}.

\begin{figure}[h]
     \centering
     \begin{minipage}[b]{.45\textwidth}
         \centering
         \includegraphics[width=.7\textwidth]{mid-int-vol}
         \subcaption{Cover an Intersection by A Partial Annulus}
         \label{fig:ap-mid-int-vol}
     \end{minipage}
     \hfill
     \begin{minipage}[b]{.4\textwidth}
         \centering
         \includegraphics[width=.7\textwidth]{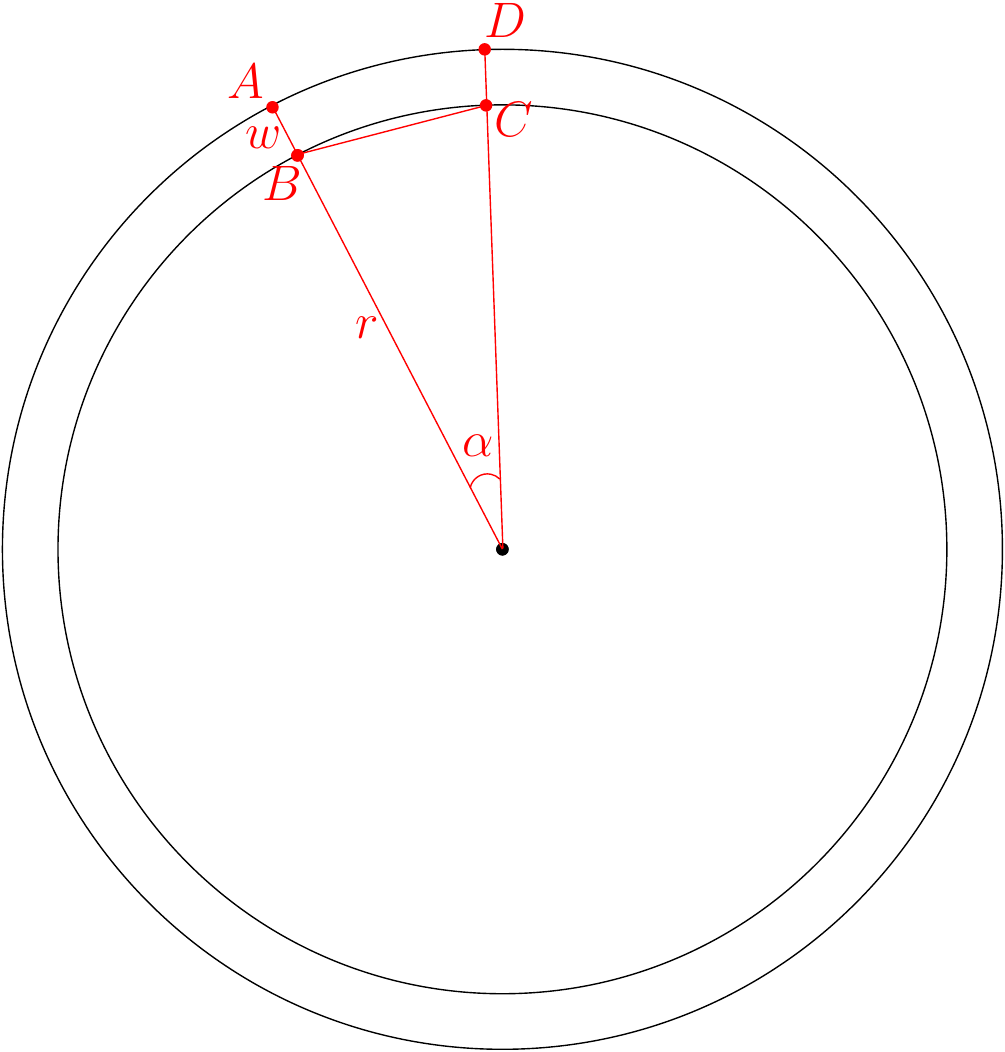}
         \subcaption{A Partial Annulus Example}
         \label{fig:ap-partial-ring}
     \end{minipage}
\caption{Cover a Quadrilateral-like Region by a Partial Annulus}
\label{fig:ap-mid-int}
\end{figure}

First, we observe that the area of the partial annulus is bounded by the product of 
the width of the annulus, $w$, and the length $FH$, within some constant factor.

To see this, consider the possible partial annulus generated by any two annuli,
see Figure~\ref{fig:ap-partial-ring} for an example.
One important observation is that for any intersection, the angle $\alpha$
of the partial annulus containing it is no more than $\pi$.
The area of the partial annulus is clearly $\alpha(rw+w^2/2)=\Theta(rw\alpha)$.
Note that for $0\le\alpha\le\pi$,
let $\beta=\alpha/2$, it is a simple fact that 
\[
\frac{1}{2}\beta\le\sin\beta\le\beta
\]
for $0\le\beta\le\pi/2$.
So for $\beta$ in this range, $\sin\beta=\Theta(\beta)$.
Thus we can compute $|FH|=2r\sin\beta=\Theta(r\alpha)$,
which implies that indeed $w\cdot|FH|$ gives us the area of the partial annulus,
within some constant factor.

Now we proceed to bound the length of $|FH|$.
Note that $|FH|\le|BD|$ because $\angle{BHD}>\frac{\pi}{2}$.
By Lemma~\ref{lem:diag-bound}, we know $|BD|=O\left(n\sqrt{\frac{w^2}{dg}}\right)$.
Therefore, the area of intersection is bounded by $O\left(wn\sqrt{\frac{w^2}{(g+w)d}}\right)$
as claimed.
\end{proof}

\begin{lemma}
\label{lem:diag-bound}
In $\mathbb{R}^2$, given two annuli centered at $O_1, O_2$ of width $w$ with inner radius lengths being $r_1, r_2$ respectively,
where $r_1+w\le r_2$, $w<r_1$, and $r_1,r_2=\Theta(n)$.
Let $d$ be the distance between two centers of the annuli and $g$ be the distance between two inner circles
of two annuli along the direction of $O_1O_2$.
Consider the quadrilateral-like region $\tilde\square ABCD$ formed by four arcs when $r_2-r_1+2w < d < r_2$,
we have $|BD|=O\left(n\sqrt{\frac{w^2}{gd}}\right)$
\end{lemma}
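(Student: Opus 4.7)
The plan is to introduce coordinates with $O_1=(0,0)$ and $O_2=(d,0)$, identify the vertices $B$ and $D$ of $\tilde\square ABCD$ in the upper half-plane, and then bound $|BD|$ by separately estimating its projections onto the $X$- and $Y$-axes and combining via the triangle inequality. Following the natural cyclic labelling of the four intersection points, $B$ lies on the outer circle of $\rR_1$ and the inner circle of $\rR_2$ (so $|O_1 B|=r_1+w$ and $|O_2 B|=r_2$), while $D$ lies on the inner circle of $\rR_1$ and the outer circle of $\rR_2$ (so $|O_1 D|=r_1$ and $|O_2 D|=r_2+w$).

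The $X$-projection is immediate. For any point $(x,y)$ lying simultaneously on two circles of radii $R_1,R_2$ centered at $O_1,O_2$, subtracting the defining equations gives $x=(d^2+R_1^2-R_2^2)/(2d)$. Applying this to $B$ and $D$ and taking the difference yields the clean cancellation $x_B-x_D = w(r_1+r_2+w)/d$, which is $O(wn/d)$ since $r_1,r_2=\Theta(n)$.

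The $Y$-projection is where the real work lies. The plan is to apply Heron's formula to $\triangle O_1 O_2 B$ and $\triangle O_1 O_2 D$, so that $y_B=2A_B/d$ and $y_D=2A_D/d$ and hence $|y_B-y_D|=2|A_B-A_D|/d$. The key structural observation is that both triangles share the same perimeter $2s=d+r_1+r_2+w$, because in each case the two circle radii sum to $r_1+r_2+w$. Consequently, the Heron expressions for $16A_B^2$ and $16A_D^2$ share the factor $(r_1+r_2+w)^2-d^2$, and their difference collapses to
\[
16(A_B^2-A_D^2) = 4w(r_2-r_1)\bigl((r_1+r_2+w)^2-d^2\bigr).
\]
To convert this into a bound on $|A_B-A_D|$, I plan to lower-bound $A_B+A_D=\Omega(n\sqrt{gd})$ by expanding the remaining Heron factors as $d^2-(r_2-r_1-w)^2=(g+w)(d+r_2-r_1-w)\geq (g+w)d$ and $d^2-(r_2-r_1+w)^2=(g-w)(d+r_2-r_1+w)\geq (g/2)d$, where the last step uses $g\geq 2w$. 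Combined with $(r_1+r_2+w)^2-d^2=\Theta(n^2)$, this gives $|A_B-A_D|=O((r_2-r_1)wn/\sqrt{gd})$. Finally, the hypothesis $d>r_2-r_1+2w$ supplies $r_2-r_1<d$, which absorbs the troublesome factor and yields $|y_B-y_D|=O(wn/\sqrt{gd})$.

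Combining the two projections via the triangle inequality, and noting that $g\leq d$ (immediate from $r_2\geq r_1$) makes the $Y$-projection bound dominate the $X$-projection bound, gives the claimed $|BD|=O(n\sqrt{w^2/(gd)})$. The main obstacle I anticipate is the lower bound $A_B+A_D=\Omega(n\sqrt{gd})$: each of the four Heron factors must be estimated tightly on the correct side of the hypotheses, and a loose bound on any one of them would cost a factor of $d/w$ or worse and prevent the $r_2-r_1$ term from being absorbed.
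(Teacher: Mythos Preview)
Your proof is correct and follows essentially the same route as the paper: the same coordinate setup, the same subtraction of circle equations for the $X$-projection, and Heron's formula on $\triangle O_1O_2B$ and $\triangle O_1O_2D$ for the $Y$-projection. The only differences are cosmetic---your labels $B$ and $D$ are swapped relative to the paper's, and you estimate $|y_B-y_D|$ via the difference-of-squares identity $(A_B^2-A_D^2)/(A_B+A_D)$ whereas the paper factors it as $h_1(1-h_2/h_1)$ and bounds the ratio $h_1/h_2=\sqrt{1+O(w/g)}$; both computations exploit the shared semiperimeter and arrive at the same bound.
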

\begin{proof}
First note that $g$ is the distance between the two inner circles along $O_1O_2$,
as shown in Figure~\ref{fig:ap-bound-bd}.

\begin{figure}[h]
  \centering
  \includegraphics[scale=0.45]{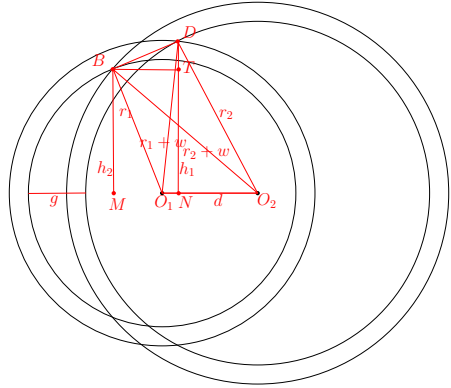}
  \caption{Bound the Length of $|BD|$}
  \label{fig:ap-bound-bd}
\end{figure}

We first compute $|BT|=x_{BD}=x_D-x_B$.
For point $B=(x_B,y_D)$, it satisfies
\[
\begin{cases}
x^2+y^2=r_1^2\\
(x-d)^2+y^2=(r_2+w)^2.
\end{cases}
\]
So $x_B$ satisfies
\[
2dx-d^2=r_1^2-(r_2+w)^2,
\]
which implies
\begin{align}
x_B=\frac{r_1^2-(r_2+w)^2+d^2}{2d}\label{eq:1}.
\end{align}
Similarly, we obtain
\begin{align}
x_D=\frac{(r_1+w)^2-r_2^2+d^2}{2d}\label{eq:2}.
\end{align}

So by (\ref{eq:1}) and (\ref{eq:2}), we obtain
\begin{align}
\label{eq:bt}
|BT|=x_{BD}=x_D-x_B=\frac{w(r_1+r_2+w)}{d}=\Theta\left(\frac{wn}{d}\right).
\end{align}

Now we compute $|DT|=y_{BD}=h_1-h_2$.
Let $\mu_1$ be the area of $\triangle{DO_1O_2}$
and $\mu_2$ be the area of $\triangle{BO_1O_2}$.

By Heron's formula,
\[
\begin{cases}
\mu_1=\sqrt{\frac{r_1+w+r_2+d}{2}\cdot\frac{r_2+d-r_1-w}{2}\cdot\frac{r_1+w+d-r_2}{2}\cdot\frac{r_1+w+r_2-d}{2}}\\
\mu_2=\sqrt{\frac{r_1+r_2+w+d}{2}\cdot\frac{r_2+w+d-r_1}{2}\cdot\frac{r_1+d-r_2-w}{2}\cdot\frac{r_1+r_2+w-d}{2}}.
\end{cases}
\]
So
\[
\frac{\mu_1}{\mu_2}=\sqrt{\frac{(r_2+d-r_1-w)(r_1+w+d-r_2)}{(r_2+w+d-r_1)(r_1+d-r_2-w)}}.
\]
Observe that $g=r_1+d-r_2$ and so
\[
\frac{\mu_1}{\mu_2}=\sqrt{\frac{(2d-g-w)(g+w)}{(2d-g+w)(g-w)}}=\sqrt{\frac{(X+2w)Y}{X(Y+2w)}}=\sqrt{1+\frac{2w(Y-X)}{XY+2wX}},
\]
where $X=g-w$ and $Y=2d-g-w$.
Since in this case, $d\ge r_2-r_1+2w\ge 3w$, and so $g=r_1+d-r_2\ge 2w$.
This implies $X=g-w=\Theta(g)$.
On the other hand, $Y-X=2(d-g)=2(r_2-r_1)>0$ and also we have $3w\le d \le Y$.
Therefore,
\[
\frac{\mu_1}{\mu_2}=\sqrt{1+O(\frac{w}{g})}.
\]
Note that
\[
\begin{cases}
\mu_1=\frac{h_1d}{2}\\
\mu_2=\frac{h_2d}{2}
\end{cases}.
\]
Since $2w\le g$, we therefore have
\[
\frac{h_1}{h_2}=\frac{\mu_1}{\mu_2}=\sqrt{1+O(\frac{w}{g})}=1+O(\frac{w}{g})\implies{\frac{h_2}{h_1}=1-O(\frac{w}{g})}.
\]
By Heron's formula,
we obtain
\[
h_1=O\left(n\sqrt{\frac{g}{d}}\right).
\]
So
\begin{align}
\label{eq:dt}
|DT|=y_{BD}=h_1-h_2=h_1\left(1-\frac{h_2}{h_1}\right)=h_1\cdot O\left(\frac{w}{g}\right)=O\left(n\sqrt{\frac{w^2}{dg}}\right).
\end{align}
Therefore, by Equation~\ref{eq:bt} and Equation~\ref{eq:dt}, we have
\[
|BD|\le|BT|+|DT|=O\left(n\sqrt{\frac{w^2}{dg}}\right).
\qedhere
\]
\end{proof}

\section{Proof of Lemma~\ref{lem:num-int-ring}}
\label{sec:proof-num-int-ring}
\numintring*

\begin{proof}
We consider for the sake of contradiction that in our construction
for every set $\sS$ of $l=cw^2/\sqrt{T}$ annuli for any positive constant $c$,
we cannot find two annuli in $\sS$ such that their intersection area is $O(nw\sqrt{1/T})$.

First observe that by our construction, any two annuli from the same family have zero intersection.
Also the distance between the centers of any two annuli is less than the radius of any annulus.
Furthermore, the width of an annulus is always less than the distance between the centers of any two annuli.
So by Lemma~\ref{lem:general-ring-int},
the intersection area of any two annuli in our construction with distance $\Omega(wT)$ is $O(nw\sqrt{1/T})$.
So for our assumption to hold,
the distance between the centers of any two annuli in $\sS$ is $o(wT)$.

Let $P$ be a point in the intersection of $\sS$, then $P$ is contained in every annulus in $\sS$.
Now consider an arbitrary annulus $\rR_1 \in \sS$ centered at $O_1$
and another annulus $\rR_2 \in \sS$ centered at $O_2$ for some $O_2$ not in line $PO_1$.
Connect $PO_1$ and $PO_2$,
for $\rR_1, \rR_2$ to contain $P$,
we must have $|PO_1| = r_1+a$ and $|PO_2| = r_2+b$ for $0 \le a, b \le w$ 
as shown in Figure~\ref{fig:ap-small-d-int-region}.

We first consider the case when the distance between the centers of $\rR_1$ and $\rR_2$ is
no more than $r_2-r_1+2w$.
In this case, their intersection area
is upper bounded by $O(nw\sqrt{w/d})$ according to Lemma~\ref{lem:general-ring-int}.

\begin{figure}[h]
  \centering
  \includegraphics[scale=0.4]{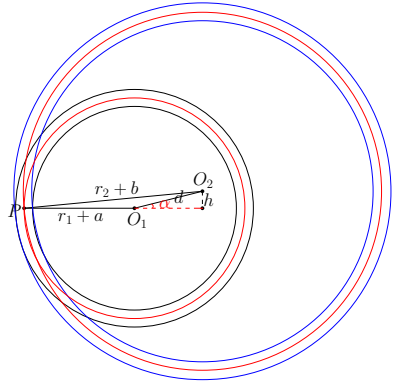}
  \caption{Intersection of Two Annuli When Distance Between Centers is Small}
  \label{fig:ap-small-d-int-region}
\end{figure}

By Heron's formula,
\begin{align*}
Area_{\triangle{PO_1O_2}}
&=\sqrt{\frac{r_1+a+r_2+b+d}{2}\cdot\frac{r_2-r_1+b-a+d}{2}\cdot\frac{r_1+a-r_2-b+d}{2}\cdot\frac{r_1+a+r_2+b-d}{2}}\\
&=O(r_1\sqrt{dw})\\
&=\Theta(r_1h),
\end{align*}
where the second equality follows from $a,b\le w\le d$ and $r_2-r_1-w\le d\le r_2-r_1+2w$,
and $h$ is the distance between $O_2$ and line $PO_1$.
This implies $h=O(\sqrt{dw})$.
Since $d=o(wT)$,
We obtain that 
\begin{align}
\label{eq:h-range-small-d}
h =o(w\sqrt{T}).
\end{align}

Now we consider the case when the distance between the centers of $\rR_1$ and $\rR_2$ is
more than $r_2-r_1+2w$. See Figure~\ref{fig:ap-bound-region} for an example.

\begin{figure}[h]
  \centering
  \includegraphics[scale=0.4]{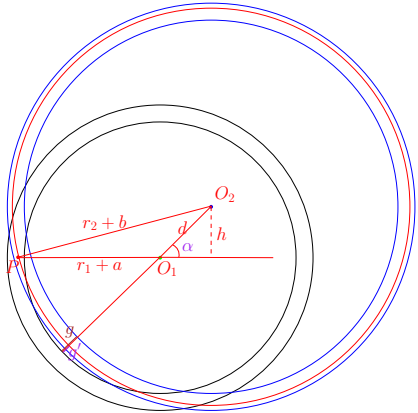}
  \caption{Intersection of Two Annuli When Distance Between Centers is Large}
  \label{fig:ap-bound-region}
\end{figure}

Let $g=r_1+d-r_2$ be the distance between the inner circle of $\rR_1$ and 
the inner circle of $\rR_2$.
Let $\alpha$ be the angle between $O_1O_2$ abd $PO_1$.
Note that $\alpha>0$ since $O_2$ is not in $PO_1$.
W.l.o.g., we assume $0<\alpha \le \pi/2$.
The situations for other values of $\alpha$ are symmetric.
Then
\begin{align}
\label{eq:alpha}
\cos(\pi-\alpha)=\frac{(r_1+a)^2+d^2-(r_2+b)^2}{2d(r_1+a)}.
\end{align}
Since $a, b<g=r_1+d-r_2<d<r_2$,
we obtain from~(\ref{eq:alpha}) that
\begin{align}
\label{eq:g-bound}
g=\Theta(d(1-\cos\alpha)). 
\end{align}
So according to Lemma~\ref{lem:general-ring-int} the intersection area of $\rR_1, \rR_2$ is upper bounded by
\[
A=O\left(wn\sqrt{\frac{w^2}{dg}}\right).
\]
Let $A =\omega\left(nw\sqrt{\frac{1}{T}}\right)$, by equation~(\ref{eq:g-bound}), we get
\begin{align}
\label{eq:d-range}
d^2(1-\cos\alpha) = o(w^2T),
\end{align}
Since $d=\frac{h}{\sin\alpha}=\frac{h}{\sqrt{1-\cos^2\alpha}}$, 
and $0<\alpha \le \pi/2$, we plug $d$ in inequality~(\ref{eq:d-range}) and obtain
\[
\frac{h^2}{1-\cos^2\alpha}(1-\cos\alpha)=\frac{h^2}{1+\cos\alpha} = o(w^2T).
\]
This implies
\begin{align}
\label{eq:h-range-big-d}
h=o\left(\sqrt{w^2T(1+\cos\alpha)}\right)=o\left(w\sqrt{T}\right).
\end{align}

So in order to have no two annuli having intersection area $O(nw\sqrt{1/T})$,
the distance between the centers of annuli in $\sS$ and $PO_1$ must be $o(w\sqrt{T})$.
We have already shown that the distance between any two centers is $o(wT)$.
Together they imply that the centers of $\sS$ fit in a rectangle of shape $o(wT)\times o(w\sqrt{T})$.
However, in our construction, each grid cell is of size $T\times T$,
this implies we only have $o(w^2/\sqrt{T})$ centers in the rectangle.
But we should have $cw^2/\sqrt{T}$ centers, a contradtion.
\end{proof}

\end{document}